\newtheorem{theorem}{Theorem}[section]
\newtheorem{lemma}[theorem]{Lemma}
\newtheorem{example}{Example}[section]
\newcommand{\edit}[1]{{\color{red}(#1)}}
\newcommand{\etal}{\emph{et al.~}}
\def\bs{{\mathbf{s}}}
\newcommand{\ie}{\emph{i.e.}}
\def\F{{\mathbb{F}}}
\newcommand{\eg}{\emph{e.g.}}
\begin{document}

\title[]{Efficient generation of odd order de Bruijn sequence with the same complement and reverse sequences}

\author{Zuling Chang and Qiang Wang}

\address{Z. Chang,  School of Mathematics and Statistics, Zhengzhou University, Zhengzhou 450001, China}
\email{zuling\textunderscore chang@zzu.edu.cn}

\address{Q. Wang,  School of Mathematics and Statistics, Carleton University, 1125 Colonel By Drive, Ottawa, Ontario, K1S 5B6, Canada}
\email{wang@math.carleton.ca}

\begin{abstract}
Experimental results show that, when the order  $n$ is odd, there are de Bruijn sequences such that the corresponding complement sequence and the reverse sequence are the same.
In this paper, we propose one efficient method to generate such de Bruijn sequences.  This  solves an open problem asked by
Fredricksen  forty years ago for showing  the existence of such de Bruijn sequences when the odd  order $n >1$.  Moreover,  we refine a characterization of  de Bruijn sequences 
with the same complement and reverse sequences and study the number of these de Bruijn sequences, as well as the distribution of  de Bruijn sequences of the maximum linear complexity.
\end{abstract}

\keywords{de Bruijn sequence, complement sequence, reverse sequence, cycle joining method}

\maketitle

\section{Introduction}
A $2^n$-periodic binary sequence is a binary de Bruijn sequence of order $n$ if every binary $n$-tuple occurs exactly once within each period. There are $2^{2^{n-1}-n}$ such sequences~\cite{Bruijn46}. These
de Bruijn sequences have many good properties including balance, maximum period and large linear complexity and they are applicable in coding, communication and cryptography ~\cite{Chan82,Golomb,HH96}.

For a given binary $N$ periodic sequence ${\bf s}=(s_0,s_1,\ldots,s_{N-1})$, its complement sequence is $\textbf{\emph{c}}{\bf s}=(\overline{s_0},\overline{s_1},\ldots,\overline{s_{N-1}})$ where $\overline{s_i}=1+s_i ~({\rm mod}~ 2)$, and its reverse sequence is $\textbf{\emph{r}}{\bf s}=(s_{N-1},s_{N-2},\ldots,s_0)$. For a given de Bruijn sequence ${\bf s}$, it is proved in \cite{Etzion84,Fred82,Mayhew94} that ${\bf s}$ and $\textbf{\emph{c}}{\bf s}$ are inequivalent (the one isn't the cyclic shift of the other), ${\bf s}$ and $\textbf{\emph{r}}{\bf s}$ are inequivalent, and if the order is even then $\textbf{\emph{c}}{\bf s}$ and $\textbf{\emph{r}}{\bf s}$ are inequivalent. From these facts we know when the order is even the number of de Bruijn sequences with given linear complexity  is divided by 4 \cite{Etzion84}.
Experimental results show that, when the order is odd, there are de Bruijn sequences satisfying $\textbf{\emph{c}}{\bf s}=\textbf{\emph{r}}{\bf s}$, {\it i.e.}, ${\bf s}=\textbf{\emph{cr}}{\bf s}$. In \cite{Etzion84}, the sequence ${\bf s}$ with ${\bf s}=\textbf{\emph{cr}}{\bf s}$ is called $CR$-sequence.  In this paper, the de Bruijn sequence being $CR$-sequence is called {\it CR de Bruijn sequence.} These sequences help understanding the hidden nature of de Bruijn sequences, for example, the distribution of  de Bruijn sequences.
%Therefore  the number of de Bruijn sequences is divided by 2  when the order is odd.
%Although from experimental results when the odd order is small, de Bruijn sequences satisfying $c{\bf s}=r{\bf s}$ are exist, we can't check this fact for arbitrary odd order and there does not exist theoretical proof.
 On page 218 in \cite{Fred82}, Fredricksen wrote, ``A harder problem would be to show that there are examples like this whenever $n$ is odd $>1$.''
 In this paper we solve this problem of Fredicksen by providing a constructive method to generate de Bruijn sequence satisfying ${\bf s}=\textbf{\emph{cr}}{\bf s}$ for arbitrary odd order. %\add{Furthermore, more properties of CR de Bruijn sequences are also proposed after the existing proof.}
% and then the formal proof of the existence of such de Bruijn sequences is deduced automatically.
%We will use cycle joining method (CJM) based on Pure Cycling Register (PCR) \cite{Fred82} to generate such de Bruijn sequences.
 In order to explain the main idea of our approach, we need a few notations and known results.

An {\it $n$-stage shift register} is a circuit of $n$ consecutive storage units, each containing a bit. The circuit is clock-regulated, shifting the bit in each unit to the next stage as the clock pulses. A shift register generates a binary code if one adds a feedback loop that outputs a new bit $s_n$ based on the $n$ bits $ s_0,\ldots,s_{n-1}={\bf s}_0\in\mathbb{F}_2^n$, which is called an initial state of the register. The corresponding Boolean {\it feedback function} $f(x_0,\ldots,x_{n-1})$ outputs $s_n$ on input ${\bf s}_0$. The output of a {\it feedback shift register} (FSR) is therefore a binary sequence ${\bf s}=\{s_i\}=s_0,s_1,\ldots,s_n,$ $\ldots$ that satisfies the recursive relation
\[
s_{n+\ell} = f(s_{\ell},s_{\ell+1},\ldots,s_{\ell+n-1}) \text{ for } \ell = 0,1,2,\ldots.
\]

For $N \in \mathbb{N}$, if $s_{i+N}=s_i$ for all $i \geq 0$, then ${\bf s}$ is {\it $N$-periodic}
or {with period $N$} and one writes ${\bf s}= (s_0,s_1,s_2,\ldots,s_{N-1})$. The least among all periods of ${\bf s}$ is called the {\it least period} of ${\bf s}$.
For any FSR, distinct initial states generate distinct sequences and thus there are $2^n$ distinct sequences generated from an FSR with feedback function $f(x_0,x_1,\ldots,x_{n-1})$. All these sequences are periodic if and only if $f$ is {\it nonsingular}, \ie, $f$ can be written as
\[
f(x_0,x_1,\ldots,x_{n-1})=x_0+h(x_1,\ldots,x_{n-1}),
\]
for some Boolean function $h(x_1,\ldots,x_{n-1})$ whose domain is $\F_2^{n-1}$~\cite[p.~116]{Golomb}.

We name ${\bf s}_i= s_i,s_{i+1},\ldots,s_{i+n-1}$ the {\it $i$-th state} of sequence ${\bf s}$ and we say that ${\bf s}_i$ is in $\bs$. Its {\it predecessor} is ${\bf s}_{i-1}$ while its {\it successor} is ${\bf s}_{i+1}$. Extending the definition to any binary vector or sequence ${\bf s}= s_0,s_1,\ldots,s_{n-1}, \ldots$, we let
$\overline{\bf s} :=
\overline{s_0},\overline{s_1},\ldots,\overline{s_{n-1}}, \ldots$. An arbitrary state ${\bf v}=v_0,v_1,\ldots,v_{n-1}$ of ${\bf s}$ has
\[
\widehat{{\bf v}}:=\overline{v_0},v_1,\ldots,v_{n-1} \mbox{ and }
\widetilde{{\bf v}}:=v_0,\ldots,v_{n-2},\overline{v_{n-1}}
\]
as its {\it conjugate} state and {\it companion} state, respectively. Hence, $({\bf v}, \widehat{{\bf v}})$ is a {\it conjugate pair} and $({\bf v}, \widetilde{{\bf v}})$ is a {\it companion pair}.

For an $N$-periodic sequence $\mathbf{s}$, the {left shift operator} $L$ maps $(s_0,s_1,\ldots,s_{N-1})$ to  $(s_1,s_2,\ldots,s_{N-1},s_0)$, with the convention that $L^0$ fixes $\mathbf{s}$. The set
\begin{equation*}
[\mathbf{s}]=\left\{\mathbf{s},L\mathbf{s},\ldots,L^{N-1}\mathbf{s} \right\}
\end{equation*}
is a {\it shift equivalent class}. Sequences in the same shift equivalent class correspond to the same cycle in the state diagram of FSR \cite{GG05}.
We call a periodic sequence in a shift equivalent class a {\it cycle}. If a nonsingular FSR with feedback function $f$ generates $r$ disjoint cycles
$C_1, C_2, \ldots, C_r$, then its {\it cycle structure} is
\[
\Omega(f)=\{C_1, C_2, \ldots, C_r\}.
\]

The {weight} of an $N$-periodic cycle $C$, denoted by $wt(C)$, is
\[
|\{ 0 \leq j \leq N-1 : c_j = 1  \}|.
\]
Similarly, the weight of a state is the number of $1$s in the state. The lexicographically least $N$-stage state in any $N$-periodic cycle is called its {\it necklace}. As discussed in, \eg, \cite{Booth80} and~\cite{Gabric18}, there is a fast algorithm that determines whether or not a state is a necklace in $O(N)$ time.

A description of the cycle joining method (CJM) \cite{Fred82,Golomb} is as follows.
Given disjoint cycles $C$ and $C'$ in $\Omega(f)$ with the property that
some state $\mathbf{v}=v_0,v_1,\ldots,v_{n-1}$ in $C$ has its conjugate state $\widehat{\mathbf{v}}$ in $C'$, interchanging the successors of $\mathbf{v}$ and $\widehat{\mathbf{v}}$ joins $C$ and $C'$ into a cycle whose feedback function is
\begin{equation}\label{eq:newfeedback}
\widehat{f}:=f(x_0,x_1,\ldots,x_{n-1})+\prod_{i=1}^{n-1}(x_i+\overline{v_{i}}).
\end{equation}
Similarly, if the companion states $\mathbf{v}$ and $\widetilde{\mathbf{v}}$ are in two distinct cycles, then interchanging their predecessors joins the two cycles. If this process can be continued until all cycles that form $\Omega(f)$ merge into a single cycle, then we obtain a de Bruijn sequence. The cycle joining method is, therefore, predicated upon knowing the cycle structure of $\Omega(f)$ and is closely related to a graph associated to the FSR.

Given an FSR with feedback function $f$, its {\it adjacency graph} $G_f$, or simply $G$ if $f$ is clear, is an undirected multigraph whose vertices correspond to the cycles of $\Omega(f)$. The number of edges between two vertices is the number of shared conjugate (or companion) pairs, with each edge labelled by a specific pair. It is well-known that there is a bijection between the set of spanning trees of $G$ and the set of all inequivalent de Bruijn sequences constructible by the CJM \cite{HH96}.

%Recently there are some fast algorithms to generate de Bruijn sequences based on the CJM and successor rules. For example, in \cite{Sawada16},  Sawada et al. proposed a simple de Bruijn sequence construction, which can be viewed as  a special case of the method in~\cite{Jansen91}.  Later on,  Gabric \etal generalized these two works to form simple successor rule frameworks that yield more de Bruijn sequences in~\cite{Gabric18}. In \cite{Zhu21}, another class of successor rules are proposed to generate de Bruijn sequences efficiently, in $O(n)$-time and $O(n)$-space per bit. The following is a description of the framework by Gabric \etal.

Gabric \etal ~\cite{Gabric18} replaced the usual CJM generating algorithm by  the successor rule $\rho(x_0,x_1,\ldots,x_{n-1})$, to generate the next bit. Given an FSR with feedback function $f(x_0,x_1,\ldots,x_{n-1})$, the general thinking behind this approach is to determine some condition which guarantees that the resulting sequence is de Bruijn.
Let $A$ be a set of conjugate pairs. For each state ${\bf c}=c_0,c_1,\ldots,c_{n-1}$, a nontrivial successor rule $\rho$ is an assignment
\begin{equation}\label{equ:rho1}
\rho(c_0,c_1,\ldots,c_{n-1}) :=
\begin{cases}
\overline{f(c_0,\ldots,c_{n-1})} & \mbox{if } {\bf c} \in A,\\
f(c_0,\ldots,c_{n-1}) & \mbox{if } {\bf c} \notin A.
\end{cases}
\end{equation}
To be precise, the successor of ${\bf c}=c_0,c_1,\ldots,c_{n-1}$ is $c_1,\ldots,c_{n-1}, f(c_0,\ldots,c_{n-1})$ except when ${\bf c} \in A$. When ${\bf c} \in A$, the successor is $c_1,\ldots,c_{n-1}, f(c_0,\ldots,c_{n-1})+1$.
To guarantee that the successor rule can generate a de Bruijn sequence, the set $A$ must be carefully selected to ensure that the conjugate pairs in it can be used to join all cycles into one. Several explicit choices of successor rules can be found in \cite{Sawada16, Zhu21}, which can also be viewed as special cases of \cite{Jansen91}.

%In this paper we will use successor rule to represent generating method.

A $CR$-sequence is characterized  by Etzion and Lempel in  \cite{Etzion84}. In the following $\textbf{\emph{cr}}X$ denotes the reverse of the complement of $X$.

\begin{lemma}\label{lemma1}\cite{Etzion84}
A sequence ${\bf s}$ is a $CR$-sequence if and only if the period is even and  ${\bf s}=(X\textbf{\emph{cr}}X)$, for some $X$. \end{lemma}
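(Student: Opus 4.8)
The plan is to collapse the identity $\mathbf{s}=\textbf{\emph{cr}}\mathbf{s}$ into a single per-coordinate condition and then read off both the parity of the period and the block shape directly from it. Fix the period $N$ with respect to which $\textbf{\emph{c}}$ and $\textbf{\emph{r}}$ are defined and write $\mathbf{s}=(s_0,s_1,\ldots,s_{N-1})$. Since complementation acts coordinatewise while reversal only permutes positions, the two operations commute, and the $i$-th entry of $\textbf{\emph{cr}}\mathbf{s}$ is $\overline{s_{N-1-i}}$. Hence $\mathbf{s}=\textbf{\emph{cr}}\mathbf{s}$ is equivalent to
\begin{equation}
s_i=\overline{s_{N-1-i}}\qquad(0\le i\le N-1),\tag{$\ast$}
\end{equation}
which is also exactly the statement $\textbf{\emph{c}}\mathbf{s}=\textbf{\emph{r}}\mathbf{s}$. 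The whole lemma then falls out of studying $(\ast)$ together with the involution $\sigma\colon i\mapsto N-1-i$ of $\{0,1,\ldots,N-1\}$.

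Next I would settle the parity. The involution $\sigma$ has a fixed point iff $N-1=2i$ is solvable, i.e. iff $N$ is odd; at such an index $(\ast)$ would force $s_i=\overline{s_i}$, which is impossible in $\F_2$. Therefore a $CR$-sequence cannot have odd period, so $N=2m$ for some $m$. Conversely any sequence of the block form below has even length, so this parity requirement is really the same fact as the decomposition.

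For the characterisation I would prove both implications from $(\ast)$. Suppose first that $\mathbf{s}$ is a $CR$-sequence, so $N=2m$, and put $X:=(s_0,\ldots,s_{m-1})$, the first half. Then $\textbf{\emph{cr}}X=(\overline{s_{m-1}},\ldots,\overline{s_0})$, whose $j$-th entry is $\overline{s_{m-1-j}}$; applying $(\ast)$ with $i=m-1-j$ gives $s_{m+j}=\overline{s_{m-1-j}}$ for $0\le j\le m-1$, so the second half of $\mathbf{s}$ is precisely $\textbf{\emph{cr}}X$ and $\mathbf{s}=(X\,\textbf{\emph{cr}}X)$. Conversely, if $\mathbf{s}=(X\,\textbf{\emph{cr}}X)$ with $|X|=m$, then $N=2m$ and, splitting into the ranges $0\le i<m$ and $m\le i<2m$ and using $\overline{\overline{x}}=x$, one checks $(\ast)$ entry by entry; by the symmetry $\sigma$ only the first range needs verification. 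Hence $\mathbf{s}$ is a $CR$-sequence.

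This is essentially index bookkeeping, so I do not anticipate a real obstacle; the only steps demanding care are the arithmetic matching the tail $s_m,\ldots,s_{2m-1}$ against $\textbf{\emph{cr}}X$ and keeping the conventions for $\textbf{\emph{c}}$, $\textbf{\emph{r}}$, and their composite straight throughout. The one conceptual point worth flagging is the word ``period'': reducing $(\ast)$ modulo any period $p\mid N$ reproduces the identical condition for $p$, so the fixed-point argument applies to the least period as well, and the evenness conclusion is independent of which periodic representative of $\mathbf{s}$ one fixes.
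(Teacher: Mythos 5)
The paper gives no proof of this lemma at all---it is quoted from \cite{Etzion84}---so the real question is whether your argument proves the statement in the sense in which the paper uses it, and there it falls short. You read ${\bf s}=\textbf{\emph{cr}}{\bf s}$ as literal coordinatewise equality, $s_i=\overline{s_{N-1-i}}$ for all $i$. Under that reading your bookkeeping is correct, but the lemma becomes the triviality your closing paragraph already senses. In this paper, as in \cite{Etzion84}, periodic sequences are cycles and equality means shift equivalence: the proof of Lemma~\ref{lem4} concludes that ${\bf s}$ is a CR sequence from the \emph{shifted} identity ${\bf s}=\textbf{\emph{cr}}(L^{n+1}{\bf s})$, Section~\ref{Sec-properties} assumes ``without loss of generality'' that $0^n$ lies in $S$, and Lemma~\ref{lem34} produces distinct CR de Bruijn sequences of the form $(L^t(S)\textbf{\emph{cr}}L^t(S))$ by rotating $S$. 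Under your literal reading the sequence built in Lemma~\ref{lem4} would not even qualify as a CR sequence. So the hypothesis you must work from is $\textbf{\emph{c}}{\bf s}=L^k\textbf{\emph{r}}{\bf s}$ for some unknown offset $k$, i.e.\ $\overline{s_i}=s_{N-1-k-i \bmod N}$ for all $i$, and the content of the lemma is that \emph{some rotation} of ${\bf s}$ splits as $X\textbf{\emph{cr}}X$; your proof only treats the special case $k=0$.

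Concretely, three steps are missing. First, to rule out odd $N$ you must show the involution $i\mapsto N-1-k-i \pmod N$ has a fixed point for \emph{every} $k$ when $N$ is odd (true, since $2$ is invertible modulo odd $N$, so $2i\equiv N-1-k$ is always solvable), not just for $k=0$. Second, for even $N$ a fixed point exists exactly when $N-1-k$ is even, i.e.\ when $k$ is odd; since a fixed point forces the contradiction $s_i=\overline{s_i}$, consistency forces $k$ to be even. Third, replacing ${\bf s}$ by $L^m{\bf s}$ changes the offset from $k$ to $k+2m \bmod N$ (using $\textbf{\emph{r}}L=L^{-1}\textbf{\emph{r}}$ and $\textbf{\emph{c}}L=L\textbf{\emph{c}}$), so an even offset can be rotated to $0$; only after this normalization does your halving argument $X:=(s_0,\ldots,s_{N/2-1})$, second half $=\textbf{\emph{cr}}X$, apply verbatim. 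These steps are short, but they are precisely what separates the lemma from a tautology, and without them you have proved a strictly weaker statement. (Your converse direction and your side remark about sub-periods are fine, the latter again only under the $k=0$ normalization.)
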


From this characterization,  in order to generate a CR de Bruijn sequence, we need to find a sequence $X$ with period $2^{n-1}$ such that a state ${\bf v}$ is in $X$  but $\textbf{\emph{cr}}{\bf v}$ isn't in $X$. Then we can use CJM to  join $X$ and $\textbf{\emph{cr}}X$ into one cycle by a suitable conjugate pair. However, Finding such an $X$ by checking all its states is a little complicated, hence we consider this problem by joining cycles of some simple FSRs directly.

The {\it Pure Cycling Register} (PCR) of order $n$ is an FSR with feedback function
\[
f(x_0,x_1,\ldots,x_{n-1})=x_0,
\]
Each cycle generated by PCR is $n$-periodic and has the form $(c_0,c_1,\ldots,c_{n-1})$. Let $\phi(\cdot)$ be the Euler totient function, it is well known, for example,  in \cite{Golomb,Sloane02},  that the number of cycles generated by PCR of order $n$ is
\begin{equation}\label{eq:number}
Z_{n} = {\frac{1}{n}} {\sum_{d|n}}\phi(d) \,2^{\frac{n}{d}}=|\Omega(f)|.
\end{equation}
This is also the number of irreducible polynomials with degrees dividing $n$ over the binary field. When $n>2$, $Z_n$ is even.

In order to generate an odd order de Bruijn sequence such that it is also a $CR$-sequence, we partition the cycle structure of PCR into two parts according to their weights. Then we join these cycles into two large cycles within each part respectively,  according to two distinct carefully selected successor rules. Finally we join two  large cycles into one de Bruijn sequence by selecting a conjugate pair of a special format.  More specifically, if the order of PCR is odd, then for a given PCR cycle $C=(c_0,c_1,\ldots,c_{n-1})$, one can check that $C'=(\overline{c_{n-1}},\overline{c_{n-2}},\ldots,\overline{c_{0}})$ is also a PCR cycle, $C\neq C'$ and $wt(C)+wt(C')=n$. Obviously there is a one-to-one correspondence between $C$ and $C'$.  In fact,  we can write $C'$ as $\textbf{\emph{cr}}C$ because $C'$ is obtained from $C$ after taking the complement and reverse. We divide $\Omega(f)$ of PCR with odd order into two subsets
\begin{equation}\label{eq:two}
T_0 =\{C\in\Omega(f) \mid wt(C)\leq \frac{n-1}{2} \} \ \mbox{and} \ T_1 =\{C\in\Omega(f) \mid wt(C)\geq \frac{n+1}{2} \}.
\end{equation}

Our approach is to  use CJM to join the cycles in $T_0$ and $T_1$ respectively according to different PCR successors.  This generates two large cycles $S_0$ and $S_1$ satisfying $S_0=\textbf{\emph{cr}}S_1$. After joining these two large cycles into one by a specially chosen conjugate pair, a de Bruijn sequence with odd order and $\textbf{\emph{c}}{\bf s}=\textbf{\emph{r}}{\bf s}$ will be generated.

\iffalse%%%

We will use cycle joining method (CJM) based on Pure Cycling Register (PCR) \cite{Fred82} to generate such de Bruijn sequences. CJM will join all cycles derived from a nonsingular feedback shift register into one cycle by interchanging the successors of the conjugate pairs in distinct cycles. There are lots of results about generating de Bruijn sequences by CJM. Recently there are some fast algorithms to generate de Bruijn sequences based on the CJM and successor rules. For example, in \cite{Sawada16},  Sawada \etal. proposed a simple de Bruijn sequence construction, which can be viewed as  a special case of the method in~\cite{Jansen91}.  Later on,  Gabric \etal generalized these two works to form simple successor rule frameworks that yield more de Bruijn sequences in~\cite{Gabric18}. In \cite{Zhu21}, another class of successor rules are proposed to generate de Bruijn sequences efficiently, in $O(n)$-time and $O(n)$-space per bit.

In order to generate odd order de Bruijn sequence such that it is also a $CR$-sequence, we partition the cycles of PCR into two parts according to their weights.  Then we join these cycles into two larger cycles within each part respectively according to two distinct carefully selected successor rules.  Finally we join two  larger cycles to form a de Bruijn sequence by selecting a conjugate pair of a special format. The correctness of results follows by the choices of conjugate pairs. % between different pairs of cycles.

In Section~\ref{sec:prelim} we gathers some useful preliminary notions and known results.

\fi

The proof of main results will be presented  in Section~\ref{sec:main}. The correctness of results follows by the choices of conjugate pairs and the existence of the spanning trees in the adjacency graph.   Our construction can efficiently generate a lot of CR de Bruijn sequences. Morever,  we study the characterization and properties of such special de Bruijn sequences in Section \ref{Sec-properties}.  Using a refined characterization of these de Bruijn sequences,  we extend a result of Etzion and Lempel by showing that the number of order $n$ de Bruijn sequences of the maximum linear complexity is congruent to $0$ modulo $16$.  A short conclusion is  given in Section~\ref{Sec4}.

%\section{Preliminaries}\label{sec:prelim}

\section{The proof of the existence}\label{sec:main}

In this section we will show how to generate CR de Bruijn sequences in the form of successor rules. %From now on $n>1$ is  always an odd integer.
We divide the proof of  the main result into several lemmas. % so we can understand it easily.
We assume $n$ is odd from now on.

The following result slightly modifies a well known PCR de Bruijn successor rule studied by Jansen \etal \cite{Jansen91} and Sawada \etal  \cite{Sawada16}, which is called {\it PCR3 successor} rule in \cite{Gabric18}.  We introduce the restriction on their cycle weights  in order to obtain a large cycle joining from all cycles of weights less than or equal to $\frac{n-1}{2}$.

\begin{lemma}\label{lem1}
For each state ${\bf c}=c_0,c_1,\ldots,c_{n-1}$ such that $wt({\bf c})\leq \frac{n-1}{2}$, we let ${\bf u}_{\bf c}=c_1,c_2,\ldots,c_{n-1},1$.
The successor rule
\begin{equation}\label{equ:rule1}
\rho_0 (c_0,c_1,\ldots,c_{n-1})=
\begin{cases}
\overline{c_0} & \mbox{if } wt(c_1,\ldots,c_{n-1})<\frac{n-1}{2} \mbox{ and } {\bf u}_{\bf c} \mbox{ is a necklace},\\
c_0 & \mbox{otherwise,}
\end{cases}
\end{equation}
joins all cycles in $T_0$, \ie, all cycles of PCR with weight less than $\frac{n+1}{2}$, into one cycle.
\end{lemma}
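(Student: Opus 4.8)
The plan is to read the successor rule $\rho_0$ as an instance of the CJM and to exhibit the conjugate pairs it activates as a spanning tree of the adjacency graph of $T_0$. First I would observe that the defining condition of $\rho_0$ in \eqref{equ:rule1} --- namely $wt(c_1,\ldots,c_{n-1})<\frac{n-1}{2}$ together with $\mathbf{u}_{\mathbf{c}}$ being a necklace --- depends only on the suffix $c_1,\ldots,c_{n-1}$ and not on $c_0$. Consequently, whenever the condition holds for a suffix, both states $\mathbf{c}=(0,c_1,\ldots,c_{n-1})$ and its conjugate $\widehat{\mathbf{c}}=(1,c_1,\ldots,c_{n-1})$ have their appended bit flipped, and a direct check shows this interchanges their PCR successors $(c_1,\ldots,c_{n-1},0)$ and $(c_1,\ldots,c_{n-1},1)$. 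Thus $\rho_0$ is exactly the rule obtained from the PCR by joining along the set of conjugate pairs $\{(0,c_1,\ldots,c_{n-1}),(1,c_1,\ldots,c_{n-1})\}$ indexed by the admissible suffixes. Since $wt(c_1,\ldots,c_{n-1})<\frac{n-1}{2}$ forces $wt(\mathbf{c})\le\frac{n-3}{2}$ and $wt(\widehat{\mathbf{c}})\le\frac{n-1}{2}$, both endpoints of every such pair are cycles in $T_0$, so all joins stay inside $T_0$.

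Next I would set up a parent map on $T_0$. The key auxiliary fact is that every PCR necklace of weight at least $1$ ends in the bit $1$: if a necklace had a nonempty run of trailing zeros, rotating that run to the front would produce a rotation with a strictly longer leading block of zeros, contradicting lexicographic minimality. Granting this, let $C\in T_0$ with $wt(C)=w\ge 1$ and let $\alpha=\alpha_0\cdots\alpha_{n-2}1$ be its necklace. Then $\widehat{\mathbf{c}}:=(1,\alpha_0,\ldots,\alpha_{n-2})$ is the right rotation of $\alpha$ and hence lies in $C$, while its left rotation equals $\alpha=\mathbf{u}_{\mathbf{c}}$, so the suffix $(\alpha_0,\ldots,\alpha_{n-2})$ is admissible (its weight is $w-1<\frac{n-1}{2}$ because $w\le\frac{n-1}{2}$). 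The corresponding pair therefore joins $C$ to the cycle $C'$ containing $\mathbf{c}=(0,\alpha_0,\ldots,\alpha_{n-2})$, and $wt(C')=w-1$, so $C'\in T_0$; I would declare $C'$ the parent of $C$.

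Finally I would check that this assignment is a bijection between the cycles of $T_0$ of weight $\ge 1$ and the conjugate pairs activated by $\rho_0$, and that it yields a tree. Each admissible suffix determines a unique necklace $\mathbf{u}_{\mathbf{c}}$ of weight in $\{1,\ldots,\frac{n-1}{2}\}$, hence a unique cycle of $T_0$ containing $\widehat{\mathbf{c}}$, and conversely; so there are exactly $|T_0|-1$ pairs, one per non-root cycle. Because $wt(C')=wt(C)-1$, the parent map strictly decreases weight, so no pair is a self-loop (its endpoints have different weights) and iterating the map from any cycle terminates at the all-zero cycle, the only weight-$0$ cycle, whose necklace $0^n$ ends in $0$ and so is never the heavier endpoint of an activated pair. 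Hence the activated pairs form a connected spanning subgraph of the adjacency graph of $T_0$ with $|T_0|-1$ edges, i.e. a spanning tree, and by the spanning-tree/CJM correspondence recalled before the lemma, $\rho_0$ merges all cycles of $T_0$ into a single cycle.

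I expect the main obstacle to be the second step: pinning down the necklace-ends-in-$1$ fact and verifying that the necklace of each cycle is precisely the object $\mathbf{u}_{\mathbf{c}}$ flagged by the rule, so that the cycle-to-pair correspondence is an exact bijection --- in particular that periodic cycles, which can occur when $n$ is an odd composite, cause no double counting since the necklace is still a unique rotation. The remaining ingredients --- the successor-swap interpretation, the weight bookkeeping keeping both endpoints in $T_0$, and the strictly-decreasing-weight argument that produces the tree rooted at the all-zero cycle --- are then routine.
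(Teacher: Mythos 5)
Your proof is correct and takes essentially the same route as the paper's: for each nonzero cycle in $T_0$, the necklace (which ends in $1$) singles out the unique activated conjugate pair joining that cycle to one of weight exactly one less, so the activated pairs form a spanning tree of the adjacency graph rooted at $(0^n)$ and the CJM merges everything into a single cycle. The extra details you supply --- that $\rho_0$ implements the CJM because the flip condition ignores $c_0$ so both conjugates swap successors, that necklaces of positive weight must end in $1$, and the explicit count of $|T_0|-1$ pairs --- are points the paper asserts or leaves implicit rather than a genuinely different argument.
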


\begin{proof}
We consider the adjacency graph of cycles in $T_0$ and conjugate pairs between cycles in it.
%In successor rule $\rho_0$, state pair satisfying the first condition is a conjugate pair because the successors of states are interchanged. If such conjugate pairs and all cycles in $T_0$ form a spanning tree, then $\rho_0$ will generate the desired cycle joining all cycles in $T_0$.

Suppose that $C\neq({0}^n)$ is a nonzero cycle in $T_0$, the necklace of $C$ must be unique and ends with a $1$. Hence  there is only one state $1,c_1,\ldots,c_{n-1}$ in $C=(1,c_1,\ldots,c_{n-1})$ among all the states of $C$ satisfying the first condition of $\rho_0$, \ie,  $wt(c_1,\ldots,c_{n-1})<\frac{n-1}{2}$ and $c_1,\ldots,c_{n-1},1$ is necklace.
Then the conjugate state of $1,c_1,\ldots,c_{n-1}$ is $0,c_1,\ldots,c_{n-1}$, which is in another cycle $C'$ with $wt(C')=wt(C)-1$.   Hence there is only one conjugate pair $({\bf v},\hat{\bf v})$ such that ${\bf v}\in C$ and $wt({\hat{\bf v}})=wt(C)-1$.

% there is only one state $1,c_1,\ldots,c_{n-1}$ in $C=(1,c_1,\ldots,c_{n-1})$

% satisfying this condition because of the uniqueness of necklace, and the other states has the form $0,s_1,\ldots,s_{n-1}$. At this time, the conjugate state of $1,c_1,\ldots,c_{n-1}$ is $0,c_1,\ldots,c_{n-1}$ in another cycle $C'$ with $wt(C')=wt(C)-1$, and the conjugate states of other states with the form $0,s_1,\ldots,s_{n-1}$ will be $1,s_1,\ldots,s_{n-1}$ with weight $wt(C)+1\leq\frac{n-1}{2}$. So we conclude that for arbitrary given cycle such that $0<wt(C)\leq\frac{n-1}{2}$, among all states satisfying the first condition of $\rho_0$ in $C$, there is a unique state which conjugate state is in another cycle $C'$ and $wt(C')=wt(C)-1$. In other words, the first condition of $\rho_0$ maybe determine many conjugate pairs containing states in $C$, but there is only one conjugate pair $({\bf v},\hat{\bf v})$ such that ${\bf v}\in C$ and $wt({\hat{\bf v}})=wt(C)-1$.

Now we view the cycles in $T_0$ as vertices and conjugate pairs determined by the first condition of $\rho_0$ as edges  in the adjacency graph.  For a conjugate pair $({\bf v},\hat{\bf v})$, ${\bf v}\in C_1$ and $\hat{\bf v}\in C_2$, if $wt({\bf v})>wt({\hat{\bf v}})$, then we say there is an edge from $C_1$ to $C_2$.  From the earlier proof,  for arbitrary nonzero cycle $C_1$, there is one unique edge from $C_1$ to some $C_2$ such that $wt(C_2)=wt(C_1)-1$, and if $C_2\neq ({0}^n)$ there is one unique edge from $C_2$ to some $C_3$ such that $wt(C_3)=wt(C_2)-1$. Continue this process,  we can reach to cycle $({0}^n)$.   The adjacency graph is in fact a spanning tree with root $({ 0}^n)$. Therefore we can obtain a large cycle by joining all the cycles in $T_0$.
% According to the results in the above paragraph, for arbitrary nonzero cycle $C_1$, there is one unique edge from $C_1$ to some $C_2$ such that $wt(C_2)=wt(C_1)-1$, and if $C_2\neq {\bf 0}^n$ there is one unique edge from $C_2$ to some $C_3$ such that $wt(C_3)=wt(C_2)-1$, and so on. At last we can reach the cycle ${\bf 0}^n$. So except for ${\bf 0}^n$, there is one unique path from arbitrary nonzero cycle to ${\bf 0}^n$, and this graph is a spanning tree with root ${\bf 0}^n$. We have proved that the successor rule $\rho_0$ beginning with initial state ${\bf c}$, $wt({\bf c})\leq \frac{n-1}{2}$, will join all cycles in set $T_0$ into one and generate a cycle or sequence containing all states with weight less than $\frac{n+1}{2}$.
\end{proof}

%This proof is similar with the proof in \cite{Gabric18} in which the same idea is used to generate de Bruijn sequence, and the generated cycle by $\rho_0$ is called universal cycle in other papers \cite{Chung1992}.

Now we consider how to join the cycles in $T_1$.  % delicately.

\begin{lemma}\label{lem2}
For each state ${\bf c}=c_0,c_1,\ldots,c_{n-1}$ such that $wt({\bf c})\geq \frac{n+1}{2}$, we let ${\bf u}_{\bf c}=\overline{c_{n-1}},\overline{c_{n-2}},\ldots,\overline{c_{1}},1$.
The successor rule
\begin{equation}\label{equ:rule1}
\rho_1 (c_0,c_1,\ldots,c_{n-1}) =
\begin{cases}
\overline{c_0} & \mbox{if } wt(c_1,\ldots,c_{n-1})>\frac{n-1}{2} \mbox{ and } {\bf u}_{\bf c} \mbox{ is a necklace},\\
c_0 & \mbox{otherwise,}
\end{cases}
\end{equation}
joins all cycles in $T_1$ into one cycle.
\end{lemma}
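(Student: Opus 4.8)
The plan is to mirror the proof of Lemma~\ref{lem1}, but now organizing the cycles of $T_1$ into a spanning tree rooted at the all-ones cycle $(1^n)$, with every edge directed toward cycles of larger weight. The observation driving the symmetry is that the test word ${\bf u}_{\bf c}=\overline{c_{n-1}},\overline{c_{n-2}},\ldots,\overline{c_1},1$ depends only on $c_1,\ldots,c_{n-1}$ and not on $c_0$; consequently $\rho_1$ flips the successor at ${\bf c}$ exactly when it flips at the conjugate state $\widehat{\bf c}$, so flips occur along conjugate pairs $({\bf v},\widehat{\bf v})$ whose two states lie in distinct cycles of weights differing by $1$. I would set up the adjacency graph of $T_1$ with these conjugate pairs as edges and prove it is a tree.

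First I would identify, for each cycle $C\in T_1$ with $C\neq(1^n)$, its unique parent. By the complement-reverse correspondence recalled before~\eqref{eq:two}, $C'=\textbf{\emph{cr}}C$ is a nonzero cycle of $T_0$, so it has a unique necklace $m$, which (being a nonzero necklace) ends in $1$. For any state ${\bf v}=v_0,v_1,\ldots,v_{n-1}$ of $C$ with $v_0=0$ one has $\overline{v_0}=1$, hence ${\bf u}_{\bf v}=\textbf{\emph{cr}}{\bf v}$, which is a state of $C'$; thus such a ${\bf v}$ makes ${\bf u}_{\bf v}$ a necklace precisely when $\textbf{\emph{cr}}{\bf v}=m$, and uniqueness of $m$ gives exactly one such state. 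For it the weight condition is automatic, since $wt(v_1,\ldots,v_{n-1})=wt({\bf v})=wt(C)\geq\frac{n+1}{2}>\frac{n-1}{2}$, so $\rho_1$ flips there; its conjugate $\widehat{\bf v}=1,v_1,\ldots,v_{n-1}$ has weight $wt(C)+1$ and lies in a cycle of that weight, which I take to be the parent of $C$.

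It remains to argue that these edges form a spanning tree. I would show the parent just found is the \emph{only} flip-edge at $C$ that raises the weight: a flip occurring at a state of $C$ whose first bit is $1$ has conjugate of smaller weight, and is therefore the parent-edge of a lighter cycle (a child-edge of $C$), not a second parent. Hence every cycle other than $(1^n)$ has exactly one parent, of weight one larger; iterating the parent map strictly increases the weight and so terminates at $(1^n)$, which has no state with first bit $0$ (its complement-reverse $(0^n)$ has no necklace ending in $1$) and hence no parent. Thus the graph is connected with one fewer edge than vertices, i.e.\ a spanning tree rooted at $(1^n)$, and by the CJM the flips prescribed by $\rho_1$ join all cycles of $T_1$ into a single cycle.

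The step I expect to be the main obstacle is the direction and uniqueness bookkeeping of the third paragraph: one must verify that the single state produced from the necklace $m$ really is the unique weight-increasing edge out of $C$, and that reading the same conjugate pair from the heavier cycle does not produce a spurious second edge—precisely the subtle point (a flip at a first-bit-$1$ state is a child-edge, not a parent-edge) that also underlies the rooted-tree claim in Lemma~\ref{lem1}. Translating the necklace condition correctly through the complement-reverse map—recognizing that ${\bf u}_{\bf c}$ equals $\textbf{\emph{cr}}{\bf c}$ exactly when $c_0=0$, and is its companion state otherwise—is what makes the symmetry with Lemma~\ref{lem1} exact.
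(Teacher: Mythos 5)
Your proposal is correct and takes essentially the same route as the paper: both reduce the claim to showing that every cycle $C\neq(1^n)$ in $T_1$ carries a unique weight-increasing conjugate pair, found by passing to $\textbf{\emph{cr}}C\in T_0$ and invoking the uniqueness of its necklace (the paper writes $C=(0,c_1,\ldots,c_{n-1})$ with ${\bf u}=\textbf{\emph{cr}}(0,c_1,\ldots,c_{n-1})$ the necklace), so that the flip edges form a spanning tree rooted at $(1^n)$. Your additional bookkeeping---that a flip at a first-bit-$1$ state of $C$ is the parent edge of the lighter conjugate cycle rather than a second parent of $C$---is left implicit in the paper's proof, so your write-up is, if anything, slightly more complete on the same argument.
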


\begin{proof}
Similar to the proof of Lemma \ref{lem1}, we just need to prove that the cycles in $T_1$ and conjugate pairs determined by the first condition in $\rho_1$ form a spanning tree with root $({1}^n)$. It is enough to prove that  %for  an arbitrary cycle $C\neq {\bf 1}^n$ with $wt(C)\geq\frac{n+1}{2}$,
for arbitrary cycle $C\neq ({1}^n)$ with $wt(C)\geq\frac{n+1}{2}$, the first condition in $\rho_1$ determines a unique conjugate pair $({\bf v},\hat{\bf v})$ satisfying ${\bf v}\in C$, $\hat{\bf v}\in C'$ and $wt(C')=wt(C)+1$.

%for all conjugate pairs determined by the fist condition in $\rho_1$ containing states in an arbitrary cycle $C\neq {\bf 1}^n$ with $wt(C)\geq\frac{n+1}{2}$, there is one unique conjugate pair $({\bf v},\hat{\bf v})$ satisfying ${\bf v}\in C$, $\hat{\bf v}\in C'$ and $wt(C')=wt(C)+1$.

For any cycle  $C$  in $T_1$ and not $({1}^n)$, we write
$C=(0,c_1,\ldots,c_{n-1})$ such  that  ${\bf u}=\overline{c_{n-1}},\overline{c_{n-2}},\ldots,\overline{c_{1}},1$ is necklace and $wt(c_1,\ldots,c_{n-1})>\frac{n-1}{2}$.
Let $C_1=({\bf u})$.
Obviously $C_1\in T_0$. Because of the uniqueness of necklace ${\bf u}$ in $C_1$, we deduce that $0,c_1,\ldots,c_{n-1}=cr{\bf u}$ is the unique state in $C$ which satisfies the first condition in $\rho_1$ and its conjugate state is $1,c_1,\ldots,c_{n-1}$ in a cycle $C'$ in $T_1$ with  the weight $wt(C)+1$.
%$\hat{\bf v} = 1,\overline{c_{n-1}},\overline{c_{n-2}},\ldots,\overline{c_{1}}$ is the unique state in $C'$ such that ${\bf u}= \overline{c_{n-1}},\overline{c_{n-2}},\ldots,\overline{c_{1}},1$ is necklace and its conjugate ${\bf v} =0,  \overline{c_{n-1}},\overline{c_{n-2}},\ldots,\overline{c_{1}}$  is in a cycle with weight $wt(C')-1$.
\end{proof}

We remark that the successor rule in Lemma~\ref{lem2} without restriction on their weights can generate a de Bruijn sequence by complementing and reversing the sequence
 derived from the PCR3 successor.

\begin{lemma}\label{lem2a}
%For each state ${\bf c}=c_0,c_1,\ldots,c_{n-1}$,  let ${\bf u}=\overline{c_{n-1}},\overline{c_{n-2}},\ldots,\overline{c_{1}},1$.
The successor
\begin{equation}\label{equ:rule1}
\rho_2 (c_0,c_1,\ldots,c_{n-1}) =
\begin{cases}
\overline{c_0} & \mbox{if }   (\overline{c_{n-1}},\overline{c_{n-2}},\ldots,\overline{c_{1}},1)  \mbox{ is a necklace},\\
c_0 & \mbox{otherwise,}
\end{cases}
\end{equation}
is a de Bruijn successor.
\end{lemma}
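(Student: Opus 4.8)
The plan is to show that $\rho_2$ is a de Bruijn successor by identifying it with a known de Bruijn successor via the complement-reverse transformation, thereby reusing the correctness of the PCR3 rule rather than re-proving the spanning-tree argument from scratch. Recall that the PCR3 successor rule of Jansen \etal and Sawada \etal, namely
\[
\rho_3(c_0,c_1,\ldots,c_{n-1}) =
\begin{cases}
\overline{c_0} & \mbox{if } (c_1,\ldots,c_{n-1},1) \mbox{ is a necklace},\\
c_0 & \mbox{otherwise,}
\end{cases}
\]
is already known to be a de Bruijn successor (it is the unrestricted version of the rule appearing in Lemma~\ref{lem1}, which joins \emph{all} PCR cycles into one). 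The remark preceding the statement already signals the intended route: the sequence produced by $\rho_2$ should be exactly the complement-reverse of the sequence produced by the PCR3 successor.

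First I would set up the correspondence between states precisely. For a state ${\bf c}=c_0,c_1,\ldots,c_{n-1}$, observe that the necklace condition in $\rho_2$, that $(\overline{c_{n-1}},\overline{c_{n-2}},\ldots,\overline{c_1},1)$ be a necklace, is exactly the PCR3 necklace condition applied to the transformed state. The key algebraic fact (derived in the discussion preceding~\eqref{eq:two}) is that for odd $n$ a PCR cycle $C$ and its image $\textbf{\emph{cr}}C$ are distinct cycles with complementary weights, and that $\textbf{\emph{cr}}$ is an involution on cycles. I would make explicit how $\textbf{\emph{cr}}$ acts on states so that flipping $c_0$ under $\rho_2$ corresponds, on the $\textbf{\emph{cr}}$-image, to flipping the appropriate bit under $\rho_3$. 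Concretely, one checks that the conjugate pair $(1,c_1,\ldots,c_{n-1})$ and $(0,c_1,\ldots,c_{n-1})$ selected by $\rho_2$ maps under $\textbf{\emph{cr}}$ to the conjugate pair selected by $\rho_3$, so that the edge $\rho_2$ adds between two $T_1$-cycles is the $\textbf{\emph{cr}}$-image of the edge $\rho_3$ adds between the corresponding $T_0$-cycles.

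The cleanest way to finish is to mirror the adjacency-graph argument of Lemma~\ref{lem1} and Lemma~\ref{lem2} but now over all of $\Omega(f)$, not merely $T_1$. Since the unrestricted PCR3 rule $\rho_3$ produces a spanning tree of the full PCR adjacency graph rooted at $(0^n)$, and since $\textbf{\emph{cr}}$ is a graph isomorphism of the PCR adjacency graph sending $(0^n)$ to $(1^n)$ and sending each $\rho_3$-edge to exactly the edge that $\rho_2$ selects, the set of conjugate pairs chosen by $\rho_2$ is the $\textbf{\emph{cr}}$-image of a spanning tree and is therefore itself a spanning tree (rooted at $(1^n)$). Hence $\rho_2$ joins all PCR cycles into a single cycle and is a de Bruijn successor.

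The main obstacle will be the bookkeeping in verifying that the $\textbf{\emph{cr}}$ transformation carries the $\rho_3$ successor rule precisely onto the $\rho_2$ successor rule at the level of individual successor assignments, not merely at the level of which conjugate pairs are selected. One must confirm that complementing-and-reversing the entire output sequence of $\rho_3$ yields a sequence whose local successor behavior is governed exactly by $\rho_2$; this requires checking that the ``flip $c_0$'' operation commutes correctly with $\textbf{\emph{cr}}$, including the index shift introduced by reversal and the bit-complementation of the tested necklace. Once this single transformation lemma is established, the de Bruijn property of $\rho_2$ follows immediately from that of the PCR3 successor, and no independent spanning-tree verification is strictly needed.
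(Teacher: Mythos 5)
Your proposal is correct and takes essentially the same route as the paper: the paper justifies Lemma~\ref{lem2a} precisely via the remark preceding it, namely that the sequence generated by $\rho_2$ is the complement-reverse of the de Bruijn sequence produced by the PCR3 successor (the induction in the proof of Lemma~\ref{lem3} is exactly the bookkeeping template you flag as the main obstacle). One minor imprecision in your auxiliary graph argument: $\textbf{\emph{cr}}$ carries a conjugate pair to a \emph{companion} pair (reversal moves the flipped bit from the first to the last coordinate), which joins cycles by interchanging predecessors rather than successors, though this does not affect your primary sequence-level argument.
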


Next we study the relation between the two large cycles  generated by successor rules $\rho_0$ and $\rho_1$ respectively.

\begin{lemma}\label{lem3}
Let ${\bf s}$ and ${\bf t}$ be two sequences generated by $\rho_0$ and $\rho_1$ respectively. Then the period of ${\bf s}$ and ${\bf t}$ are all $2^{n-1}$. Furthermore, ${\bf s}$ and $\textbf{\emph{cr}}{\bf t}$ are shift equivalent.
\end{lemma}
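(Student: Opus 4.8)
The plan is to treat the two assertions separately. For the periods, I would observe that since $n$ is odd and all states in a PCR cycle have equal weight, $T_0$ consists of exactly the $n$-stage states of weight at most $\frac{n-1}{2}$. The symmetry $\binom{n}{k}=\binom{n}{n-k}$ then gives $|T_0|=\sum_{k=0}^{(n-1)/2}\binom{n}{k}=2^{n-1}$, and likewise $|T_1|=2^{n-1}$. By Lemmas \ref{lem1} and \ref{lem2}, each of $\rho_0,\rho_1$ joins its cycles into a single cycle in which every state of $T_0$, respectively $T_1$, appears exactly once. A cyclic sequence with pairwise distinct length-$n$ windows has least period equal to the number of windows, so both ${\bf s}$ and ${\bf t}$ have period $2^{n-1}$.

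For the shift-equivalence I would pass to the induced state-transition maps. Write $\sigma_0$ and $\sigma_1$ for the successor maps on $T_0$ and $T_1$ determined by $\rho_0$ and $\rho_1$; by the previous lemmas each is a single $2^{n-1}$-cycle. Since $\textbf{\emph{cr}}$ is an involution interchanging $T_0$ and $T_1$, and since reversing a cyclic sequence turns successors into predecessors, unwinding the definition of $\textbf{\emph{cr}}{\bf t}$ shows that the successor of a state ${\bf a}\in T_0$ in $\textbf{\emph{cr}}{\bf t}$ equals $\textbf{\emph{cr}}\big(\sigma_1^{-1}(\textbf{\emph{cr}}\,{\bf a})\big)$. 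Hence it suffices to prove the single identity
\[
\textbf{\emph{cr}}\circ\sigma_1\circ\textbf{\emph{cr}}=\sigma_0^{-1}\quad\text{on }T_0,
\]
for then, as $\textbf{\emph{cr}}$ is an involution, the successor of ${\bf a}$ in $\textbf{\emph{cr}}{\bf t}$ becomes $(\textbf{\emph{cr}}\circ\sigma_1^{-1}\circ\textbf{\emph{cr}})({\bf a})=\sigma_0({\bf a})$, its successor in ${\bf s}$. As both sequences are single cycles on $T_0$ carrying the same successor map, they coincide up to a shift and hence are shift equivalent.

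The heart of the argument, and the step I expect to be the main obstacle, is verifying this identity by matching the triggering conditions of the two rules. Given ${\bf a}=a_0,\ldots,a_{n-1}\in T_0$ and ${\bf b}=\textbf{\emph{cr}}\,{\bf a}$, a short substitution using $b_i=\overline{a_{n-1-i}}$ rewrites the condition that $\rho_1$ flips at ${\bf b}$, namely that $wt(b_1,\ldots,b_{n-1})>\frac{n-1}{2}$ and $(\overline{b_{n-1}},\ldots,\overline{b_1},1)$ is a necklace, into the condition that $wt(a_0,\ldots,a_{n-2})<\frac{n-1}{2}$ and $(a_0,\ldots,a_{n-2},1)$ is a necklace. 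I would then observe that this is exactly the condition deciding whether the $\rho_0$-predecessor of ${\bf a}$ is the conjugate state $\overline{a_{n-1}},a_0,\ldots,a_{n-2}$ rather than the plain PCR-predecessor $a_{n-1},a_0,\ldots,a_{n-2}$, since both candidates share the tail $a_0,\ldots,a_{n-2}$ and thus the same $\rho_0$ trigger. The remaining work is bookkeeping: checking that the weight bound $wt(a_0,\ldots,a_{n-2})<\frac{n-1}{2}$ forces the relevant conjugate state back into $T_0$ so that the flip is legitimate, and confirming that in each case exactly one of the two candidates actually maps to ${\bf a}$ under $\sigma_0$. Matching these cases and applying $\textbf{\emph{cr}}$ to both sides yields $\textbf{\emph{cr}}\circ\sigma_1\circ\textbf{\emph{cr}}=\sigma_0^{-1}$, which completes the proof.
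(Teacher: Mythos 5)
Your proposal is correct and is in substance the same argument as the paper's: the paper proves the period claim by the same count of the $2^{n-1}$ states in each weight class, and its inductive proof that ${\bf s}_i=\textbf{\emph{cr}}{\bf t}_{-i}$ (anchored at $0^n=\textbf{\emph{cr}}\,1^n$) is precisely your identity $\textbf{\emph{cr}}\circ\sigma_1\circ\textbf{\emph{cr}}=\sigma_0^{-1}$ unrolled along the orbit, with the same key step of matching the necklace trigger of $\rho_0$ at a state to that of $\rho_1$ at its $\textbf{\emph{cr}}$-image and resolving the predecessor via the shared tail. Your phrasing as a conjugation of successor maps, with shift equivalence following because both sequences realize the single cycle $\sigma_0$, is merely a cleaner packaging of the paper's induction.
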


\begin{proof}
The period of ${\bf s}$ and ${\bf t}$ are all $2^{n-1}$  because $n$ is odd and there are $2^{n-1}$ states with weight less than or equal to $\frac{n-1}{2}$ and larger than or equal to $\frac{n+1}{2}$ respectively.

We use induction to prove ${\bf s}$ and $\textbf{\emph{cr}}{\bf t}$ are shift equivalent. Here we use ${\bf t}_{-i}$ to represent the $(2^{n-1}-i)$-th state of ${\bf t}$, \ie, ${\bf t}_{2^{n-1}-i}$.

Without loss of generality, we let the initial state of ${\bf s}$ and ${\bf t}$ be ${0}^n$ and ${1}^n$ respectively.  Obviously,  ${0}^n=\textbf{\emph{cr}}{1}^n$. According to the successor rules $\rho_0$ and $\rho_1$, the successor of ${0}^n$ is ${0}^{n-1}1$ and the predecessor of ${1}^n$ is $0{1}^{n-1}$ satisfying
\[
{\bf s}_1={0}^{n-1}1=\textbf{\emph{cr}}(0{1}^{n-1})=\textbf{\emph{cr}}{\bf t}_{-1}.
\]
Suppose that for $i=0,1,2,\ldots,$ we have
\[
{\bf s}_i=s_0,s_1,\ldots,s_{n-1}=\textbf{\emph{cr}}{\bf t}_{-i}=\textbf{\emph{cr}}(t_0,t_1,\ldots,t_{n-1})=\overline{t_{n-1}},\overline{ t_{n-2}},\ldots,\overline{ t_0}.
\]

Then we consider the successor of ${\bf s}_i$ in two cases.  First,  if $s_1,\ldots,s_{n-1},1$ is a necklace, then the successor is ${\bf s}_{i+1}=s_1,\ldots,s_{n-1},\overline{s_0}$.
On the other hand, the predecessor of ${\bf t}_{-i}$ is ${\bf t}_{-i-1}=x,t_0,t_1,\ldots,t_{n-2}$ such that
\[
\overline{ t_{n-2}},\ldots,\overline{ t_0},1=s_1,\ldots,s_{n-1},1
\]
is also a necklace.  Hence  its successor must be $t_0,t_1,\ldots,t_{n-2},\bar{x}={\bf t}_{-i}$ and thus $x=\overline{t_{n-1}}$. So
\[
{\bf s}_{i+1}=s_1,\ldots,s_{n-1},\overline{s_0}=\textbf{\emph{cr}}{\bf t}_{-i-1}=cr (\overline{t_{n-1}},t_0,t_1,\ldots,t_{n-2})=\overline{t_{n-2}},\ldots,\overline{ t_0},t_{n-1}.
\]

Second, if $s_1,\ldots,s_{n-1},1$ is not a necklace, then we can similarly deduce that ${\bf s}_{i+1}=\textbf{\emph{cr}}{\bf t}_{-i-1}$ and the proof is complete.
\end{proof}

Finally we choose conjugate pairs of specific format from these two cycles generated by successor rules $\rho_0$ and $\rho_1$,  in order to  obtain a desired de Bruijn sequence.

\begin{lemma}\label{lem4}
By interchanging the successors of a conjugate pair $({\bf v},\hat{\bf v})$, where ${\bf v}=0,c_1,\ldots,c_{n-1}$ and $c_i=\overline{c_{n-i}}$, $i=1,2,\ldots,\frac{n-1}{2}$, two cycles generated by $\rho_0$ and $\rho_1$ respectively can be joined into an order $n$  de Bruijn sequence ${\bf s}$ satisfying ${\bf s}=\textbf{\emph{cr}}{\bf s}$.
\end{lemma}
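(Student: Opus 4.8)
The plan is to establish two separate facts: first, that interchanging the successors of the conjugate pair $(\mathbf{v},\widehat{\mathbf{v}})$ actually merges the two large cycles of Lemmas~\ref{lem1} and~\ref{lem2} into a single de Bruijn cycle; and second, that the resulting sequence $\mathbf{s}$ satisfies $\mathbf{s}=\textbf{\emph{cr}}\,\mathbf{s}$.

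For the first fact I would compute the weights of the two states. The constraint $c_i=\overline{c_{n-i}}$ for $1\le i\le \frac{n-1}{2}$ splits the coordinates $c_1,\dots,c_{n-1}$ into $\frac{n-1}{2}$ complementary pairs, each contributing exactly one $1$, so $wt(c_1,\dots,c_{n-1})=\frac{n-1}{2}$. Hence $\mathbf{v}=0,c_1,\dots,c_{n-1}$ has weight $\frac{n-1}{2}$ and lies in a cycle of $T_0$, i.e.\ in the large cycle generated by $\rho_0$, whereas $\widehat{\mathbf{v}}=1,c_1,\dots,c_{n-1}$ has weight $\frac{n+1}{2}$ and lies in the large cycle generated by $\rho_1$. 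These being two distinct cycles that share the conjugate pair $(\mathbf{v},\widehat{\mathbf{v}})$, the CJM interchange joins them into one cycle on all $2^{n-1}+2^{n-1}=2^n$ states, which is an order-$n$ de Bruijn sequence $\mathbf{s}$.

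For the second fact the main idea is to show that the involution $\psi:=\textbf{\emph{cr}}$ on states is a direction-reversing automorphism of the joined cycle. By Lemma~\ref{lem3}, $\psi$ already carries the $\rho_0$-cycle onto the $\rho_1$-cycle while reversing the direction of traversal (the $i$-th state of one goes to the $(-i)$-th state of the other), so it only remains to check that this symmetry survives the single edge-swap made by the join. Here I would use the weight equality $wt(c_1,\dots,c_{n-1})=\frac{n-1}{2}$ a second time: it makes the first alternative of both $\rho_0$ and $\rho_1$ fail at $\mathbf{v}$ and at $\widehat{\mathbf{v}}$, so before joining the successor of $\mathbf{v}$ in the $\rho_0$-cycle is $A=c_1,\dots,c_{n-1},0$ and the successor of $\widehat{\mathbf{v}}$ in the $\rho_1$-cycle is $B=c_1,\dots,c_{n-1},1$. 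Using the antisymmetry $c_i=\overline{c_{n-i}}$ (equivalently $\overline{c_{n-j}}=c_j$ for every $j$), a direct computation gives $\psi(\mathbf{v})=B$ and $\psi(\widehat{\mathbf{v}})=A$, hence $\psi(A)=\widehat{\mathbf{v}}$ and $\psi(B)=\mathbf{v}$. After the interchange the two new edges are therefore $\mathbf{v}\to B=\psi(\mathbf{v})$ and $\widehat{\mathbf{v}}\to A=\psi(\widehat{\mathbf{v}})$, and each is fixed by the edge-reversal $\mathbf{w}\to\mathbf{w}'\mapsto\psi(\mathbf{w}')\to\psi(\mathbf{w})$, while the two deleted edges $\mathbf{v}\to A$ and $\widehat{\mathbf{v}}\to B$ are interchanged by it. Thus $\psi$ sends the full edge set of the joined cycle onto itself, reversing every arrow.

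To finish, a direction-reversing automorphism of a single $2^n$-cycle is a reflection $\mathbf{s}_j\mapsto\mathbf{s}_{m-j}$, so $\textbf{\emph{cr}}\,\mathbf{s}_j=\mathbf{s}_{m-j}$ for a fixed $m$ and all $j$; comparing first coordinates gives $s_k=\overline{s_{M-k}}$ for all $k$ with $M=m+n-1$, which is exactly the statement that $\mathbf{s}$ is shift-equivalent to $\textbf{\emph{cr}}\,\mathbf{s}$, i.e.\ $\mathbf{s}=\textbf{\emph{cr}}\,\mathbf{s}$ (alternatively, this symmetry exhibits $\mathbf{s}$ in the form $(X\,\textbf{\emph{cr}}X)$ and one invokes Lemma~\ref{lemma1}). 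I expect the crux to be the compatibility check of the previous paragraph: everything depends on $wt(\mathbf{v})$ landing exactly on the boundary value $\frac{n-1}{2}$, which is what forces both successor rules to default to $\rho(\mathbf{v})=c_0$ and makes the two swapped edges respect the $\textbf{\emph{cr}}$-symmetry. The weight and necklace bookkeeping around $\mathbf{v}$ and $\widehat{\mathbf{v}}$ is the delicate step, whereas the final reflection argument is routine.
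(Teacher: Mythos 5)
Your proof is correct and takes essentially the same approach as the paper: both arguments rest on Lemma~\ref{lem3}'s index-reversing correspondence ${\bf s}_i=\textbf{\emph{cr}}\,{\bf t}_{-i}$ between the $\rho_0$- and $\rho_1$-cycles, together with the observation that the antisymmetry $c_i=\overline{c_{n-i}}$ makes the two swapped edges at $({\bf v},\hat{\bf v})$ respect the $\textbf{\emph{cr}}$-symmetry, the paper compressing your direction-reversing-automorphism argument into the single identity ${\bf s}=\textbf{\emph{cr}}(L^{n+1}{\bf s})$ checked from the initial state ${\bf v}$. Your explicit weight computation showing $wt({\bf v})=\frac{n-1}{2}$ and $wt(\hat{\bf v})=\frac{n+1}{2}$, hence that the pair genuinely straddles the two large cycles, is left implicit in the paper, so your writeup is if anything more complete.
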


\begin{proof}
Suppose ${\bf s}$ is the de Bruijn sequence  generated  with initial state ${\bf v}$.   Then we have
\[
{\bf s}=(0,c_1,\ldots,c_{n-1},1,\cdots) \mbox{ and }  L^{n+1}{\bf s}=(\cdots, 0,c_1,\ldots,c_{n-1},1).
\]
Using Lemma \ref{lem3},  it is easy to check that ${\bf s}=\textbf{\emph{cr}}(L^{n+1}{\bf s})$.
\end{proof}

Because there are $2^{\frac{n-1}{2}}$ distinct ${\bf v}=0,c_1,\ldots,c_{n-1}$ satisfying $c_i=\overline{c_{n-i}}$, $i=1,2,\ldots,\frac{n-1}{2}$,  we can obtain $2^{\frac{n-1}{2}}$ inequivalent CR de Bruijn sequences, from the cycles generated by $\rho_0$ and $\rho_1$.

Summarizing the results in Lemmas \ref{lem1} - \ref{lem4}, we derive a formal successor rule to generate CR de Bruijn sequence.

\begin{theorem}\label{th1}
For each state ${\bf c}=c_0,c_1,\ldots,c_{n-1}$, we let ${\bf u}_{\bf c}=c_1,c_2,\ldots,c_{n-1},1$ if $wt(c_1,c_2,\ldots,c_{n-1})< \frac{n-1}{2}$ and
${\bf u}_{\bf c}=\overline{c_{n-1}},\overline{c_{n-2}},\ldots,\overline{c_{1}},1$ if $wt(c_1,c_2,\ldots,c_{n-1})> \frac{n-1}{2}$. Let ${\bf d}$ be any vector in $B$, where
\[
B=\{(c_1,\ldots,c_{n-1})\in\mathbb{F}_2^{n-1} \mid c_i=\overline{c_{n-i}}, i=1,2,\ldots,\frac{n-1}{2}  \}.
\]
The successor rule
\begin{equation}\label{equ:rule}
\rho (c_0,c_1,\ldots,c_{n-1})=
\begin{cases}
\overline{c_0} & \mbox{if } {\bf u}_{\bf c} \mbox{ is a necklace},\\
\overline{c_0} & \mbox{if } (c_1,c_2,\ldots,c_{n-1})={\bf d},\\
c_0 & \mbox{otherwise,}
\end{cases}
\end{equation}
generates a CR de Bruijn sequence.
\end{theorem}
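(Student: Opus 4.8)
The plan is to recognize $\rho$ as a single bookkeeping device that simultaneously runs $\rho_0$ on the low-weight states, $\rho_1$ on the high-weight states, and performs the one conjugate-pair interchange of Lemma~\ref{lem4}. Once this is shown, the state diagram of $\rho$ must be the two cycles $S_0$ and $S_1$ of Lemmas~\ref{lem1} and~\ref{lem2} with the successors of a single conjugate pair swapped, and the conclusion is immediate from Lemma~\ref{lem4}.

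First I would sort states by $wt(c_1,\ldots,c_{n-1})$ and use $wt({\bf c})=c_0+wt(c_1,\ldots,c_{n-1})$. When $wt(c_1,\ldots,c_{n-1})<\frac{n-1}{2}$ we get $wt({\bf c})\le\frac{n-1}{2}$, so ${\bf c}\in T_0$, and the first definition of ${\bf u}_{\bf c}$ together with the first clause of $\rho$ reproduces $\rho_0$ verbatim; symmetrically, when $wt(c_1,\ldots,c_{n-1})>\frac{n-1}{2}$ we have ${\bf c}\in T_1$ and $\rho$ reproduces $\rho_1$. The states left over are exactly the boundary states with $wt(c_1,\ldots,c_{n-1})=\frac{n-1}{2}$, on which ${\bf u}_{\bf c}$ is undefined, the first clause is inapplicable, and $\rho$ returns $c_0$ unless the second clause fires.

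Next I would identify the second clause with the join. Since every ${\bf d}\in B$ satisfies $c_i=\overline{c_{n-i}}$ for $1\le i\le\frac{n-1}{2}$, each of the $\frac{n-1}{2}$ index pairs $\{i,n-i\}$ contributes exactly one $1$, whence $wt({\bf d})=\frac{n-1}{2}$. Thus the only two states meeting $(c_1,\ldots,c_{n-1})={\bf d}$ are ${\bf v}=0{\bf d}$ and $\hat{\bf v}=1{\bf d}$: they form a conjugate pair, both sit on the weight boundary (so the first clause never touched them), and by weight ${\bf v}\in T_0$ while $\hat{\bf v}\in T_1$. A one-line computation shows that flipping the output on precisely these two states sends the successor of ${\bf v}$ to the former successor of $\hat{\bf v}$ and conversely, which is exactly the interchange of successors invoked in Lemma~\ref{lem4}.

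Putting the pieces together, the state diagram of $\rho$ arises from the disjoint union of the cycles $S_0$ and $S_1$ (generated by $\rho_0$ and $\rho_1$, and covering $T_0$ and $T_1$ respectively by Lemmas~\ref{lem1}, \ref{lem2} and~\ref{lem3}) by interchanging the successors of the single conjugate pair $({\bf v},\hat{\bf v})$. As ${\bf v}$ and $\hat{\bf v}$ lie in different cycles, this merges $S_0$ and $S_1$ into one cycle of length $2^{n-1}+2^{n-1}=2^n$ visiting every state once, that is, a de Bruijn sequence, and Lemma~\ref{lem4} then yields ${\bf s}=\textbf{\emph{cr}}{\bf s}$. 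I expect the only delicate point to be the bookkeeping at the weight boundary: one must verify that leaving ${\bf u}_{\bf c}$ undefined exactly on the weight-$\frac{n-1}{2}$ states causes no harm, and that the pair chosen through $B$ lands precisely on that boundary so that it straddles $T_0$ and $T_1$; the rest is a faithful transcription of the earlier lemmas.
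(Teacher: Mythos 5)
Your proposal is correct and follows exactly the paper's intended argument: the paper proves Theorem~\ref{th1} precisely by ``summarizing'' Lemmas~\ref{lem1}--\ref{lem4}, i.e., observing that $\rho$ restricts to $\rho_0$ on states with $wt(c_1,\ldots,c_{n-1})<\frac{n-1}{2}$, to $\rho_1$ on those with $wt(c_1,\ldots,c_{n-1})>\frac{n-1}{2}$, and that the second clause interchanges the successors of the single conjugate pair $(0{\bf d},1{\bf d})$ as in Lemma~\ref{lem4}. Your verification of the boundary bookkeeping (that $wt({\bf d})=\frac{n-1}{2}$, so the pair straddles $T_0$ and $T_1$ and is untouched by the first clause) is the only nontrivial detail, and you handle it correctly.
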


We consider the complexity of this successor rule. It is clear that the space complexity is $O(n)$. Checking whether ${\bf u}_{\bf c}$ is a necklace can be done in $O(n)$ time which was established in~\cite{Gabric18}. Thus, the successor rule $\rho$ in Theorem \ref{th1} requires time and space complexities $O(n)$ to generate the next bit of the de Bruijn sequence.

As a consequence, we formulate the following statement which solves the problem proposed by Fredricksen \cite{Fred82}.

\begin{theorem}
When $n>1$ is odd, there always exist order $n$ de Bruijn sequences satisfying that the complement and reverse sequences are the same.
\end{theorem}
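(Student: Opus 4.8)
The plan is to derive this existence statement directly from the constructive machinery assembled in Lemmas~\ref{lem1}--\ref{lem4} (equivalently, from Theorem~\ref{th1}), so that the only genuinely new thing to check is that the construction is non-vacuous and applies uniformly to every odd $n>1$. First I would record the partition observation that makes the whole scheme possible: since $n$ is odd, the integers $\frac{n-1}{2}$ and $\frac{n+1}{2}$ are consecutive and sum to $n$, so every state of $\mathbb{F}_2^n$ has weight either at most $\frac{n-1}{2}$ or at least $\frac{n+1}{2}$. Consequently $T_0$ and $T_1$ from \eqref{eq:two} partition the cycle structure of the PCR, and by Lemmas~\ref{lem1} and~\ref{lem2} the rules $\rho_0$ and $\rho_1$ merge them into two cycles $S_0$ and $S_1$, each of period $2^{n-1}$ by Lemma~\ref{lem3}, which together cover all $2^n$ states exactly once.

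Next I would exhibit a single conjugate pair bridging $S_0$ and $S_1$. Take any $\mathbf{d}=(c_1,\ldots,c_{n-1})\in B$, so that $c_i=\overline{c_{n-i}}$ for $i=1,\ldots,\frac{n-1}{2}$, and set $\mathbf{v}=0,c_1,\ldots,c_{n-1}$ with conjugate $\widehat{\mathbf{v}}=1,c_1,\ldots,c_{n-1}$. The crucial (and essentially only) computation is on the weight: the indices $1,\ldots,n-1$ split into the $\frac{n-1}{2}$ disjoint pairs $\{i,n-i\}$, and within each pair exactly one coordinate equals $1$ because $c_i+c_{n-i}=c_i+\overline{c_i}=1$; hence $wt(\mathbf{v})=\frac{n-1}{2}$ and $wt(\widehat{\mathbf{v}})=\frac{n+1}{2}$. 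Therefore $\mathbf{v}\in S_0$ and $\widehat{\mathbf{v}}\in S_1$, so $(\mathbf{v},\widehat{\mathbf{v}})$ straddles the two distinct large cycles. Since $|B|=2^{(n-1)/2}\geq 1$ for every odd $n>1$, such a pair always exists, which is precisely the uniformity we need.

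Finally I would invoke the cycle joining method: interchanging the successors of $\mathbf{v}$ and $\widehat{\mathbf{v}}$ merges $S_0$ and $S_1$ into one cycle of period $2^{n-1}+2^{n-1}=2^n$, which visits every $n$-tuple exactly once and is thus an order $n$ de Bruijn sequence $\mathbf{s}$. By Lemma~\ref{lem4} this $\mathbf{s}$ satisfies $\mathbf{s}=\textbf{\emph{cr}}\mathbf{s}$, that is $\textbf{\emph{c}}\mathbf{s}=\textbf{\emph{r}}\mathbf{s}$, which is the desired conclusion.

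I expect the only real subtlety --- the ``obstacle'' --- to be the verification that the chosen conjugate pair genuinely crosses the weight boundary: the symmetry condition $c_i=\overline{c_{n-i}}$ forces $wt(\mathbf{v})$ to equal \emph{exactly} $\frac{n-1}{2}$, which is what places $\mathbf{v}$ in the cycle $S_0$ coming from $T_0$ and $\widehat{\mathbf{v}}$ in the cycle $S_1$ coming from $T_1$; had the weight landed elsewhere the two large cycles would fail to join. Everything else --- the existence and period of $S_0,S_1$ and the $CR$ property of the joined sequence --- is already supplied by the preceding lemmas, so the present statement reduces to confirming non-vacuousness of $B$ and the boundary computation, both of which hold for all odd $n>1$.
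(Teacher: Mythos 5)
Your proof is correct and follows essentially the same route as the paper, which obtains this theorem directly as a consequence of Lemmas~\ref{lem1}--\ref{lem4} (packaged as the successor rule of Theorem~\ref{th1}). Your explicit check that the symmetry condition $c_i=\overline{c_{n-i}}$ forces $wt(\mathbf{v})=\frac{n-1}{2}$ and $wt(\widehat{\mathbf{v}})=\frac{n+1}{2}$, so that the conjugate pair genuinely straddles $S_0$ and $S_1$, is a step the paper leaves implicit, and your verification of it is accurate.
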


In the sequel, we provide a few concrete examples.

\begin{example} \label{examples}
When $n=5$, $B=\{0011,1100,0101,1010\}$. If we take different vectors in $B$ as ${\bf d}$ in successor rule $\rho$ in Theorem \ref{th1}, we can get 4 inequivalent CR de Bruijn sequences (see Table \ref{table:1}).

\begin{table*}[h!]
	\caption{Inequivalent de Bruijn sequences constructed based on Theorems~\ref{th1} with $n=5$.}
	\label{table:1}
	\renewcommand{\arraystretch}{1.2}
	\centering
	{\footnotesize
		\begin{tabular}{ccc}%\label{tab:1}
			\hline
			Entry & ${\bf d}=d_1d_2d_3d_4$ & Resulting sequence based on Theorem~\ref{th1}\\
			\hline
			$1$   & $0011$                   & $(1	1	1	1	1	0	1	1	0	1	0	1	1	1	0	0	1	1	0	0	0	1	0	1	0	0	1	0	0	0	0	0)$\\
			$2$   & $1100$                 & $(	0	0	1	1	1	1	1	0	1	1	0	1	0	1	1	1	0	0	0	1	0	1	0	0	1	0	0	0	0	0 1	1)$\\
			$3$   & $0101$                 & $(0	1	0	0	1	0	0	0	0	0 1	1	0	0	0	1	0	1	1	1	0	0	1	1	1	1	1	0	1	1	0	1	)$\\
			$4$   & $1010$                 & $(1	0	0	1	0	0	0	0	0 1	1	0	0	0	1	0	1	0	1	1	1	0	0	1	1	1	1	1	0	1	1	0	)$\\
						\hline
					\end{tabular}
	}
\end{table*}

When $n=7$, the successor rule $\rho$ in Theorem \ref{th1} can generate 8 distinct CR de Bruijn sequences with period 128.  For example,  if we take ${\bf d}=000111$,  then the resulting sequence is
\begin{align*}
{\bf s}= & (1111111011101001110110110011011110101101010111001011111001111000\\
         &  1110000110000010110001010100101000010011001001000110100010000000).
\end{align*}

When $n=9$, the successor rule $\rho$ in Theorem \ref{th1} can generate $2^{\frac{9-1}{2}}=16$ distinct CR de Bruijn sequences with period 512.  Again, if we take ${\bf d}=00001111$, then the resulting sequence is
\begin{align*}
{\bf s}=(&1111111110111101110111001110111110110110111010110110010110111100\\
         &1101101001101110001101111110101110110101100110101111010101101010\\
         &1011100101011111001011101001011110001011111110011110110011100110\\
         &0111110100111010100111100100111111000111011000111101000111110000\\
         &1111000001110100001110010001110000001101100001101010001101000001\\
         &1001100011001000011000000010111000010110100010110000010101100010\\
         &1010100101010000101001100101001000101000000100111000100110100100\\
         &1100001001011001001010001001001000001000110001000100001000000000).
\end{align*}
\end{example}

We note that there are other ways to join all the cycles in $T_0$ and join all cycles in $T_1$  in order to generate CR de Bruijn sequences. For example, we can use PCR4 successor in \cite{Gabric18} combined with the weight restriction as we did earlier. %\edit{The proof is omitted here.}
\begin{theorem}
For each state ${\bf c}=c_0,c_1,\ldots,c_{n-1}$, we let
${\bf u}_{\bf c}=0,c_1,c_2,\ldots,c_{n-1}$ if $wt(c_1,c_2,\ldots,c_{n-1})>\frac{n-1}{2}$ and
${\bf u}_{\bf c}=0,\overline{c_{n-1}},\overline{c_{n-2}},\ldots,\overline{c_{1}}$ if $wt(c_1,c_2,\ldots,c_{n-1})< \frac{n-1}{2}$. Let ${\bf d}$ be any given vector in $B$,
where
\[
B=\{(c_1,\ldots,c_{n-1})\in\mathbb{F}_2^{n-1} \mid c_i=\overline{c_{n-i}}, i=1,2,\ldots,\frac{n-1}{2}  \}.
\]

The successor rule
\begin{equation}\label{equ:rule}
\rho (c_0,c_1,\ldots,c_{n-1})=
\begin{cases}
\overline{c_0} & \mbox{if } {\bf u}_{\bf c} \mbox{ is a necklace},\\
\overline{c_0} & \mbox{if } (c_1,c_2,\ldots,c_{n-1})={\bf d},\\
c_0 & \mbox{otherwise,}
\end{cases}
\end{equation}
generates a CR de Bruijn sequence.
\end{theorem}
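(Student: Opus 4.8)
The plan is to mirror, step by step, the four-part argument behind Theorem~\ref{th1} (Lemmas~\ref{lem1}--\ref{lem4}), with the two PCR3-based rules replaced by the two PCR4-based rules appearing here. Write $\rho_1'$ for the restriction of $\rho$ to states with $wt(c_1,\ldots,c_{n-1})>\frac{n-1}{2}$ (so ${\bf u}_{\bf c}=0,c_1,\ldots,c_{n-1}$) and $\rho_0'$ for its restriction to states with $wt(c_1,\ldots,c_{n-1})<\frac{n-1}{2}$ (so ${\bf u}_{\bf c}=0,\overline{c_{n-1}},\ldots,\overline{c_1}$); a state whose tail has weight exactly $\frac{n-1}{2}$ keeps its PCR successor under the first line of $\rho$. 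Since $n$ is odd, $\rho_1'$ can only flip successors at states of weight $\geq\frac{n+1}{2}$ and $\rho_0'$ only at states of weight $\leq\frac{n-1}{2}$, so the two rules act inside $T_1$ and $T_0$ respectively. It then suffices to prove four claims: $\rho_1'$ joins all cycles of $T_1$ into one cycle ${\bf t}$; $\rho_0'$ joins all cycles of $T_0$ into one cycle ${\bf s}$; ${\bf s}$ and $\textbf{\emph{cr}}{\bf t}$ are shift equivalent; and the conjugate pair singled out by the second line of $\rho$ merges ${\bf s}$ and ${\bf t}$ into a CR de Bruijn sequence.

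First I would treat $\rho_1'$ by a direct necklace argument in the spirit of Lemma~\ref{lem1}, but with the weight inequalities reversed so that the tree is rooted at $({1}^n)$. For a cycle $C\in T_1$ with $C\neq({1}^n)$, its necklace $\nu_C=0,c_1,\ldots,c_{n-1}$ ends in $1$, lies in $C$, and satisfies $wt(c_1,\ldots,c_{n-1})=wt(C)>\frac{n-1}{2}$, so $\nu_C$ is marked by $\rho_1'$ while its conjugate $1,c_1,\ldots,c_{n-1}$ sits in a cycle of weight $wt(C)+1$. The one point that needs care --- where PCR4 differs from PCR3 at first sight --- is that the test ``$0,c_1,\ldots,c_{n-1}$ is a necklace'' generally marks several states of $C$, so one must not claim a unique marked state. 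What is true, exactly as in Lemma~\ref{lem1}, is that there is a unique marked conjugate pair $({\bf v},\widehat{\bf v})$ with ${\bf v}\in C$ and $wt(\widehat{\bf v})=wt(C)+1$: this forces ${\bf v}=\nu_C$, while every other marked state of $C$ has a \emph{lighter} conjugate and is merely the heavy endpoint of an edge belonging to a lighter cycle. Orienting each such edge from $C$ toward weight $wt(C)+1$ then gives exactly one outgoing edge per non-root vertex and identifies the adjacency graph as a spanning tree rooted at $({1}^n)$, so $\rho_1'$ yields a single cycle.

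Next I would handle $\rho_0'$, the $\textbf{\emph{cr}}$-twisted rule, by the reduction that Lemma~\ref{lem2} uses to borrow from Lemma~\ref{lem1}. The key identity is $\textbf{\emph{cr}}(c_1,\ldots,c_{n-1},1)=0,\overline{c_{n-1}},\ldots,\overline{c_1}={\bf u}_{\bf c}$, together with the fact that $\textbf{\emph{cr}}$ is a weight-reversing involution carrying the cycles of $T_0$ bijectively onto those of $T_1$. For each $D\in T_0$ with $D\neq({0}^n)$ this lets me transport the uniqueness just proved for $\rho_1'$ on $\textbf{\emph{cr}}D$ back to $D$, producing a unique marked conjugate pair whose in-cycle endpoint is heavier than its conjugate --- a unique down-edge toward weight $wt(D)-1$ --- and hence a spanning tree rooted at $({0}^n)$. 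The period count is that of Lemma~\ref{lem3}: since $n$ is odd there are exactly $2^{n-1}$ states of weight $\leq\frac{n-1}{2}$ and $2^{n-1}$ of weight $\geq\frac{n+1}{2}$, so ${\bf s}$ and ${\bf t}$ each have period $2^{n-1}$.

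The last two steps are formally those of Lemmas~\ref{lem3} and~\ref{lem4}. I would prove ${\bf s}$ and $\textbf{\emph{cr}}{\bf t}$ shift equivalent by the induction of Lemma~\ref{lem3}, beginning from ${0}^n=\textbf{\emph{cr}}({1}^n)$ and checking that one forward step of $\rho_0'$ matches one backward step of $\rho_1'$ under $\textbf{\emph{cr}}$; the base step already holds, since the $\rho_0'$-successor of ${0}^n$ is ${0}^{n-1}1=\textbf{\emph{cr}}(0{1}^{n-1})$ and $0{1}^{n-1}$ is the $\rho_1'$-predecessor of ${1}^n$, and the inductive step is the same necklace bookkeeping reduced again through the identity above. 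Finally, as in Lemma~\ref{lem4}, interchanging the successors of the symmetric conjugate pair $({\bf v},\widehat{\bf v})$ with ${\bf v}=0,c_1,\ldots,c_{n-1}$ and $c_i=\overline{c_{n-i}}$ joins ${\bf s}$ and ${\bf t}$, and ${\bf s}=\textbf{\emph{cr}}(L^{n+1}{\bf s})$ shows the result is a CR de Bruijn sequence; this join is consistent because ${\bf d}$ has weight $\frac{n-1}{2}$, so $0{\bf d}$ and $1{\bf d}$ have tail weight exactly $\frac{n-1}{2}$ and are left untouched by the first line of $\rho$. I expect the genuine obstacle to be the first claim (the spanning tree for $\rho_1'$): getting the bookkeeping right for PCR4. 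Because its necklace test marks more than one state per cycle, the argument must isolate the single conjugate pair serving as a cycle's edge and verify that every active pair is the edge of exactly one cycle, rather than appealing to the (false) claim that each cycle has a unique marked state.
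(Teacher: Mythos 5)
Your proposal is correct and follows precisely the route the paper intends for this theorem, which it states without proof as the PCR4 analogue of Theorem~\ref{th1}: a spanning tree on $T_1$ from the plain necklace test (rooted at $(1^n)$, with edges oriented toward increasing weight), the $\textbf{\emph{cr}}$-transported tree on $T_0$, the shift-equivalence induction of Lemma~\ref{lem3}, and the final join along the symmetric conjugate pair with tail ${\bf d}$ of weight $\frac{n-1}{2}$ as in Lemma~\ref{lem4}. Your care in isolating, per cycle, the unique marked \emph{conjugate pair} with heavier (resp.\ lighter) conjugate rather than claiming a unique marked state is well taken, and is in fact the correct reading of the paper's own Lemma~\ref{lem1} as well, since there too a cycle can contain several states passing the necklace test (the extra ones being heavy endpoints of edges charged to lighter cycles).
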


Using weights of cycles as a criterion to  divide the cycles of PCR into two parts  makes  it easier to  propose successor rules to fast generate CR de Bruijn sequences. In fact there are other ways of dividing cycles of PCR  in order to generate such de Bruijn sequences. For example, when $n=5$, PCR provides 8 distinct cycles. We can divide theses cycles into two parts
\[
T_0=\{(0^5),(00001),(00011),(01011)\},
T_1=\{(1^5),(01111),(00111),(00101)\},
\]
satisfying that if cycle $C$ is in $T_0$ then $\textbf{\emph{cr}}C$ is in $T_1$.
We can join the cycles in $T_0$ by using conjugate pairs $(0^5,10^4)$, $(01000,11000)$ and $(00110,10110)$, into one cycle $S=(0000010001101011)$. It is easy to check that $\textbf{\emph{cr}}S$ can be obtained by joining cycles in $T_1$.
Using special conjugate pair $(01100,11100)$, we can join $S$ and $\textbf{\emph{cr}}S$ into a CR de Bruijn sequence
\[
{\bf s}=(0000010001101011,0010100111011111).
\]
However, it is a little bit complicated to describe the corresponding successor rule to efficiently generate ${\bf s}$.
%\edit{Partition them into  $S$ and $cS$?}\edit{Yes, we can generate $S$ and $\textbf{\emph{c}}S$ and join them into a de Bruijn sequences with the form $(L^tS\textbf{\emph{c}}L^{t'}S)$ for some $t$ and $t'$, but it must not be CR de Bruijn sequence.}

%Furthermore, other FSRs also can be used to generate CR de Bruijn sequences by cycle joining method. %We leave them for interested readers.
%\edit{more explicit explanation should be given}

\section{Distribution  of  de Bruijn sequences}\label{Sec-properties}

%After we have conformed the existence of de Bruijn sequences being $CR$-sequences, properties of such sequences are discussed in this section, including the structure and the number of such sequences.

In this section, we refine the characterization of CR de Bruijn sequences and use it to study the number of CR de Bruijn sequences, as well as the distribution of  de Bruijn sequences with the maximum linear complexity.

Suppose that ${\bf s}$ is an  order $n$ de Bruijn sequence satisfying $\textbf{\emph{c}}{\bf s}=\textbf{\emph{r}}{\bf s}$. According to Lemma \ref{lemma1}, ${\bf s}=(S\textbf{\emph{cr}}S)$ for some $S$ with period $2^{n-1}$  satisfying that %there do not exist two states being the same and
 $\textbf{\emph{cr}}{\bf v}$ isn't in $S$ if  the state ${\bf v}$ is in $S$.
Without loss of generality, we can assume that the all zeroes state $0^n$ is in $S$.
 Let $S=(s_0,s_1,\ldots,s_{N-1})$, where $N=2^{n-1}$.  Then
\begin{equation}\label{equ:cr-1}
{\bf s}=(S\textbf{\emph{cr}}S)=(s_0,s_1,\ldots,s_{N-1},\overline{s_{N-1}},\ldots,\overline{s_1},\overline{s_0}).
\end{equation}
In this case, we first show that the CR de Bruijn sequence ${\bf s} = (S\textbf{\emph{cr}}S) $ that is presented by concatenating $S$ and $\textbf{\emph{cr}}S$
can be obtained by the CJM  from joining $S =(s_0,s_1,\ldots,s_{N-1}) $ and $\textbf{\emph{cr}}S = (\overline{s_{N-1}},\ldots,\overline{s_1},\overline{s_0})$ according to a specific conjugate pair.

\begin{lemma}\label{lem31}
Suppose that ${\bf s}$ is a CR de Bruijn sequence with the form in Equation (\ref{equ:cr-1}) such that  the all zeroes state $0^n$ is in $S$. Then
\begin{enumerate}
  \item ${\bf s}$ can be written as
  \begin{align*}
  (&s_0,\ldots,s_{\frac{n-3}{2}},s_{\frac{n-1}{2}}, s_{\frac{n+1}{2}}, \ldots,  s_{N-\frac{n+3}{2}},  s_{\frac{n-1}{2}},\overline{s_0},\ldots,\overline{s_{\frac{n-3}{2}}},\\
   &s_{\frac{n-3}{2}},\ldots,s_0,\overline{s_{\frac{n-1}{2}}},   \overline{s_{N-\frac{n+3}{2}}}, \ldots,  \overline{s_{\frac{n+1}{2}}},  \overline{s_{\frac{n-1}{2}}},,\overline{s_{\frac{n-3}{2}}},\ldots,\overline{s_0}).
  \end{align*}
  \item The conjugate pair $({\bf v},\hat{\bf v})$ such that  ${\bf v}$ in $S$,  which  is used to join $S$ and $\textbf{\emph{cr}}S$ into one cycle ${\bf s}$,  satisfies the form
  \[
  {\bf v}=s_{\frac{n-1}{2}},\overline{s_0},\ldots,\overline{s_{\frac{n-3}{2}}},s_0,\ldots,s_{\frac{n-3}{2}}.
  \]
  \item $S$ is unique.
\end{enumerate}
\end{lemma}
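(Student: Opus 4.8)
The plan is to prove the three assertions in increasing order of difficulty, using the uniqueness claim~(3) as a clean anchor and deriving the explicit form~(1) as a byproduct of locating the join pair~(2).

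First I would dispose of~(3). Since ${\bf s}$ is de Bruijn, its least period is exactly $2^n=2N$, so the only cyclic shifts $L^k{\bf s}$ that are again of the concatenated shape $(\,\cdot\,\textbf{\emph{cr}}(\cdot)\,)$ are those satisfying $\textbf{\emph{cr}}(L^k{\bf s})=L^k{\bf s}$. Using the identity $\textbf{\emph{cr}}(L^k{\bf s})=L^{-k}\textbf{\emph{cr}}{\bf s}=L^{-k}{\bf s}$ (complementation commutes with shifts and $\textbf{\emph{r}}(L^k{\bf s})=L^{-k}\textbf{\emph{r}}{\bf s}$), this forces $L^{2k}{\bf s}={\bf s}$, hence $k\equiv 0\pmod N$. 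Thus there are exactly two presentations, ${\bf s}=(S\,\textbf{\emph{cr}}S)$ and ${\bf s}=(\textbf{\emph{cr}}S\,S)$. Because a run of $n$ equal bits cannot straddle either reflection axis of the anti-palindromic symmetry $s_{2N-1-i}=\overline{s_i}$ (that would place a bit next to its own complement), the state $0^n$ lies entirely inside one of the two length-$N$ arcs, so precisely one of the two presentations has $0^n$ in its first half. This pins down $S$ and gives~(3).

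Next I would set up the local analysis yielding~(1) and~(2). The hypothesis $\textbf{\emph{c}}{\bf s}=\textbf{\emph{r}}{\bf s}$ reads coordinatewise as $s_{2N-1-i}=\overline{s_i}$, so the entire second half of ${\bf s}$ is already determined by $S$; consequently the second line of the displayed form is automatic once the first is known, and the genuine content is a rewriting of $S$ itself together with the identification of the conjugate pair. I would view ${\bf s}=(S\,\textbf{\emph{cr}}S)$ as the result of the CJM applied to the two disjoint cycles $S$ and $\textbf{\emph{cr}}S$: fusing them into one cycle uses exactly one conjugate pair $({\bf v},\widehat{\bf v})$ with ${\bf v}\in S$ and $\widehat{\bf v}\in\textbf{\emph{cr}}S$, and this pair is the one witnessed by the ``broken'' edge at the seam between $S$ and $\textbf{\emph{cr}}S$. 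Reading the $n$ consecutive bits of the window straddling that seam and translating each entry back to a coordinate of $S$ through the reflection $i\mapsto 2N-1-i$ produces the candidate expression for ${\bf v}$ in~(2). I would then use that $0^n\in S$ (which fixes the predecessor $10^{n-1}$ and successor $0^{n-1}1$ of the all-zero state, and hence anchors the seam) together with the cr-symmetry relating the two seams to check that the middle coordinates of ${\bf v}$ obey the self-complementary pattern $c_i=\overline{c_{n-i}}$, i.e. that ${\bf v}=s_{(n-1)/2},\overline{s_0},\ldots,\overline{s_{(n-3)/2}},s_0,\ldots,s_{(n-3)/2}$; substituting this shape and the relation $s_{2N-1-i}=\overline{s_i}$ back into the concatenation then gives the explicit form in~(1).

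The main obstacle is the bookkeeping in this middle step: correctly identifying which length-$n$ window realizes the join, matching its entries to the coordinates $s_0,\ldots,s_{(n-3)/2},s_{(n-1)/2}$ of $S$ under the reflection, and proving that this particular straddling pair, and no competing one, is the pair whose interchange fuses $S$ and $\textbf{\emph{cr}}S$ into a single $2^n$-cycle. This is exactly where the de Bruijn hypothesis (each $n$-tuple occurring once) must be combined with the cr-symmetry, and I expect the cleanest route is to track both seams simultaneously under the involution $i\mapsto 2N-1-i$, so that the constraints at the second seam come for free from those at the first, leaving only a finite, symmetric system of bit-equalities to read off.
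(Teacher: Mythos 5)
Your treatment of part (3) is correct, and it is in fact more self-contained than the paper's: the paper obtains uniqueness only as a byproduct of its conjugate-pair analysis, whereas your shift computation ($\textbf{\emph{cr}}(L^k{\bf s})=L^{-k}\textbf{\emph{cr}}{\bf s}=L^{-k}{\bf s}$, forcing $L^{2k}{\bf s}={\bf s}$ and hence $k\equiv 0\pmod N$ because a de Bruijn sequence has least period $2N$), combined with the observation that a constant run cannot cross a seam (adjacent seam bits are complementary), pins down the two presentations and the half containing $0^n$ directly.

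For parts (1) and (2), however, your proposal is the paper's strategy with its decisive step left out. You correctly reduce the problem to identifying which window realizes the join at the seam, but you explicitly defer, as ``the main obstacle,'' the proof that no competing straddling pair can occur --- and that elimination \emph{is} the mathematical content of the lemma. Moreover, your proposed mechanism (reducing to ``a finite, symmetric system of bit-equalities to read off'') would not succeed by itself. Parametrize the candidate ${\bf v}\in S$ as $s_{N-t},\ldots,s_{N-1},s_0,\ldots,s_{n-t-1}$ with $1\leq t\leq n-1$ (there is also the boundary-aligned case ${\bf v}=s_{N-n},\ldots,s_{N-1}$, which your sketch, speaking of ``the'' straddling window, does not enumerate). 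Matching $\hat{\bf v}$ against the corresponding window of $\textbf{\emph{cr}}S$ at the other seam yields the equalities in Equation~(\ref{equ:ccc}). For $t<\frac{n+1}{2}$ these are indeed self-contradictory ($s_{N-t}=s_{t-1}$ together with $s_{t-1}=\overline{s_{N-t}}$), so there your plan works. But for $t>\frac{n+1}{2}$, and likewise in the boundary case, the bit-equalities are perfectly consistent: they merely force $S$ to contain the stretch $s_{t-1},\overline{s_{t-2}},\ldots,\overline{s_0},s_0,\ldots,s_{t-1}$ of length $2t>n+1$. The contradiction the paper extracts there is combinatorial rather than equational: since $n-t<t-1$, this stretch contains both the state ${\bf w}=(\overline{s_{t-2}},\ldots,\overline{s_0},s_0,\ldots,s_{n-t})$ and its image $\textbf{\emph{cr}}{\bf w}$, so ${\bf w}$ would occur once in $S$ and again, via $\textbf{\emph{cr}}{\bf w}\in S$, inside the second half $\textbf{\emph{cr}}S$ --- violating the de Bruijn property. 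Without this repeated-window argument the offset is not pinned to $t=\frac{n+1}{2}$, and neither the form of ${\bf v}$ in (2) nor the explicit writing of ${\bf s}$ in (1) follows. A final small point: your remark that $0^n\in S$ ``anchors the seam'' is misleading --- the zero run plays no role in locating the join; it only normalizes which of the two halves is called $S$.
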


\begin{proof}
Let us consider which conjugate pair can be used to join $S$ and $\textbf{\emph{cr}}S$ to obtain ${\bf s}$. We note that  the period of $S$  and
$\textbf{\emph{cr}}S$ are both $N$.
If one of the conjugate states in $S$ is $s_{N-n},\ldots,s_{N-1}$,
then its conjugate state in $\textbf{\emph{cr}}S$ must be $\overline{s_{n-1}},\ldots,\overline{s_0}$ satisfying
\[
s_{N-n}=s_{n-1}, s_{N-n+1}=\overline{s_{n-2}},\ldots, s_{N-1}=\overline{s_0}.
\]
Hence in $S$ there exists consecutive terms
\[
s_{n-1},\overline{s_{n-2}},\ldots,\overline{s_0},s_0,\ldots,s_{n-2},s_{n-1}.
\]
Obviously, both the state   $(\overline{s_0},s_0,\ldots,s_{n-2})$  and
\[
\textbf{\emph{cr}}(\overline{s_0},s_0,\ldots,s_{n-2}) = (\overline{s_{n-2}},\ldots,\overline{s_0},s_0)
\]
are in $S$,  which contradicts to that  $(S\textbf{\emph{cr}}S)$ is a CR de Bruijn sequence.

For other cases, suppose that one of the conjugate states in $S$ is
\[
s_{N-t},\ldots,s_{N-1},s_0,\ldots,s_{n-t-1}, \ \ 1\leq t\leq n-1.
\]
Then its conjugate state in $\textbf{\emph{cr}}S$ must be $\overline{s_{t-1}},\ldots,\overline{s_0},\overline{s_{N-1}},\ldots,\overline{s_{N-n+t}}$ satisfying
\begin{equation}\label{equ:ccc}
s_{N-t}={s_{t-1}},s_{N-t+1}=\overline{s_{t-2}},\ldots, s_{N-1}=\overline{s_0},s_0=\overline{s_{N-1}},\ldots,s_{n-t-1}=\overline{s_{N-n+t}}.
\end{equation}
If $t>\frac{n+1}{2}$, then in $S$ there exists consecutive terms
\[
s_{t-1},\overline{s_{t-2}},\ldots,\overline{s_0},s_0,\ldots,s_{t-1}
\]
with length $2t>n+1$.  This implies that $n-t<t-1$ and thus the state ${\bf v}=(\overline{s_{t-2}},\ldots,\overline{s_0},s_0,\ldots,s_{n-t})$  and the state
\[
\textbf{\emph{cr}}{\bf v} = (\overline{s_{n-t}},\ldots,\overline{s_0},s_0,\ldots,s_{t-2})
\]
are both in $S$, a contradiction.
If $t<\frac{n+1}{2}$, then  $t -1 < n-t$ and thus  from  Equation (\ref{equ:ccc}) we obtain both $s_{N-t}={s_{t-1}}$ and ${s_{t-1}} = \overline{s_{N-t}}$ simultaneously, which is also impossible.

Hence the only possible conjugate pair  ${\bf v}$ in $S$ and  $\hat{\bf v}$ in $\textbf{\emph{cr}}S$ that can be used  to join them into a CR de Bruijn sequence is
\[
{\bf v}=s_{N-\frac{n+1}{2}},\ldots,s_{N-1},s_0,\ldots,s_{\frac{n-3}{2}}
\]
and
\[
\hat{\bf v}=\overline{s_{\frac{n-1}{2}}},\ldots,\overline{s_0},\overline{s_{N-1}},\ldots,\overline{s_{N-\frac{n-1}{2}}}
\]
satisfying
\[
s_{N-\frac{n+1}{2}}=s_{\frac{n-1}{2}},s_{N-\frac{n-1}{2}}=\overline{s_{\frac{n-3}{2}}},\ldots,s_{N-1}=\overline{s_0}.
\]
Indeed, by Lemma~\ref{lem4}, we can use this conjugate pair to join $S$ and $\textbf{\emph{cr}}S$ into ${\bf s}$.  The  first $\frac{n-1}{2}$ terms of $S$ are  $s_0,\ldots,s_{\frac{n-3}{2}}$ and the last
$\frac{n+1}{2}$ terms of $S$ are  $s_{\frac{n-1}{2}},\overline{s_0},\ldots,\overline{s_{\frac{n-3}{2}}}$.  The rest of the proof follows immediately.
\end{proof}

\iffalse
It is easy to check that if after left shifting the CR de Bruijn sequence ${\bf s}$ is written as $(s_0,s_1,\ldots,s_{2^n-1})$ satisfying that $s_i=\overline{s_{2^n-1-i}}$ for $0\leq i\leq 2^{n-1}-1$, then by taking $S=(s_0,s_1,\ldots,s_{2^{n-1}-1})$, ${\bf s}$ have to be represented as $(ScrS)$ or $((crS) S)$. Without loss of generality, suppose that the state $0^n$ is in $S$. Hence we can obtain an refinement of Lemma \ref{lemma1} as an important fact used later.
\fi

Hence each CR de Bruijn sequence ${\bf s}$  corresponds a  unique $S$ containing the all zero state and a conjugate pair that can be used to join $S$ and $\textbf{\emph{cr}}S$ through the concatenation.

\begin{lemma}\label{lem32}
For any given $S$ such that $(S\textbf{\emph{cr}}S)$ is a CR de Bruijn sequence,  let ${\bf v}$ be any state  in $S$  satisfying
\begin{equation*}
{\bf v}=c_0,c_1,\ldots,c_{n-1}, \ \ \ \  c_i=\overline{c_{n-i}}, i=1,2,\ldots,\frac{n-1}{2}.
\end{equation*}
If ${\bf v}$ in $S$, then the conjugate state $\hat{\bf v}$ is in $\textbf{\emph{cr}}S$. Furthermore, the next bit of $0, c_1,\ldots,c_{n-1}$ is 0 and the next bit of $1, c_1,\ldots,c_{n-1}$ is 1.
\end{lemma}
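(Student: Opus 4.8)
The plan is to combine two structural facts about ${\bf s}=(S\,\textbf{\emph{cr}}S)$: first, that $S$ and $\textbf{\emph{cr}}S$ partition the $2^{n}$ states, with ${\bf w}\in S$ if and only if $\textbf{\emph{cr}}{\bf w}\in \textbf{\emph{cr}}S$ (equivalently $\textbf{\emph{cr}}{\bf w}\notin S$); and second, that ${\bf s}$ is a de Bruijn sequence satisfying ${\bf s}=\textbf{\emph{cr}}{\bf s}$, so every $(n-1)$-tuple occurs exactly twice in ${\bf s}$. I would begin with the elementary observation that, since the tail $w=c_1,\ldots,c_{n-1}$ obeys $c_i=\overline{c_{n-i}}$ (equivalently $\textbf{\emph{cr}}w=w$), a direct computation gives $\textbf{\emph{cr}}{\bf v}=c_1,\ldots,c_{n-1},\overline{c_0}$ and $\textbf{\emph{cr}}\hat{\bf v}=c_1,\ldots,c_{n-1},c_0$; in particular these two states form the companion pair with prefix $w$. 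I would also note that the two assertions about next bits are jointly equivalent to the single claim that, inside its own cycle, every special state is followed by its own leading bit, i.e. $0,c_1,\ldots,c_{n-1}$ appends $0$ and $1,c_1,\ldots,c_{n-1}$ appends $1$.

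For the next-bit claim I would run a counting argument on the two occurrences of $w$ in ${\bf s}$. Each occurrence carries a preceding bit and a following bit; the two preceding bits form $\{0,1\}$ (producing the conjugate states $0w$ and $1w$) and the two following bits form $\{0,1\}$ (producing $w0$ and $w1$). Hence if the occurrence sitting inside ${\bf v}$ has preceding bit $c_0$ and following bit $b$, the other occurrence has preceding bit $\overline{c_0}$ and following bit $\overline{b}$. The key input is that the symmetry ${\bf s}=\textbf{\emph{cr}}{\bf s}$ permutes the occurrences of the cr-invariant tuple $w$, sending an occurrence with pair $(\mathrm{pre},\mathrm{foll})=(x,y)$ to one with pair $(\overline{y},\overline{x})$. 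If this symmetry interchanges the two occurrences, comparing pairs forces $b=c_0$, which is exactly next bit $=$ leading bit; if instead it fixes each occurrence, it forces $b=\overline{c_0}$.

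The fixed case is the main obstacle and must be handled separately. A cr-fixed occurrence of $w$ is centred on the axis of the symmetry ${\bf s}=\textbf{\emph{cr}}{\bf s}$, i.e. it straddles the junction between $S$ and $\textbf{\emph{cr}}S$; the corresponding conjugate pair is precisely the pair used to join $S$ and $\textbf{\emph{cr}}S$ identified in Lemma~\ref{lem31}. For this unique join pair the successor inside the cycle and the successor in ${\bf s}$ differ, since joining interchanges exactly these two successors, so the within-cycle next bit is the complement of the ${\bf s}$-next bit, namely $\overline{\overline{c_0}}=c_0$, which restores the claim. Thus for every special state the within-cycle next bit equals its leading bit, establishing the two next-bit assertions. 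Verifying that the cr-fixed case really is the join pair of Lemma~\ref{lem31}, and that the successor swap is the only discrepancy between the cycle and ${\bf s}$, is the delicate point I would treat with care.

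Finally I would deduce the membership statement from the next-bit statement. Since ${\bf v}=c_0,c_1,\ldots,c_{n-1}\in S$ and its within-cycle successor appends $c_0$, that successor is $c_1,\ldots,c_{n-1},c_0=\textbf{\emph{cr}}\hat{\bf v}$, which therefore lies in $S$. Applying the partition property to $\textbf{\emph{cr}}\hat{\bf v}$ gives $\textbf{\emph{cr}}\hat{\bf v}\in S$ if and only if $\hat{\bf v}\in \textbf{\emph{cr}}S$, whence $\hat{\bf v}\in\textbf{\emph{cr}}S$, as required.
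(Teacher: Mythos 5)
Your proposal is correct, but it takes a genuinely different and considerably heavier route than the paper's. The paper never looks at the assembled sequence ${\bf s}$ or its $cr$-symmetry at all: both assertions are one-step contradictions using only the facts that the states of $S$ and $\textbf{\emph{cr}}S$ are disjoint (since $(S\textbf{\emph{cr}}S)$ is de Bruijn) and that $\textbf{\emph{cr}}$ maps states of $S$ onto states of $\textbf{\emph{cr}}S$. For membership: if ${\bf v}$ and $\hat{\bf v}$ were both in $S$, their in-cycle successors would be the two distinct states $c_1,\ldots,c_{n-1},c_0$ and $c_1,\ldots,c_{n-1},\overline{c_0}$, both in $S$; since the symmetric tail gives $\textbf{\emph{cr}}(c_1,\ldots,c_{n-1},\overline{c_0})=c_0,c_1,\ldots,c_{n-1}={\bf v}$, the state ${\bf v}$ would then lie in $\textbf{\emph{cr}}S$ as well, a contradiction (and symmetrically for both in $\textbf{\emph{cr}}S$). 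For the next bit: if $0,c_1,\ldots,c_{n-1}$ were followed by $1$ in its cycle, then $\textbf{\emph{cr}}(c_1,\ldots,c_{n-1},1)=0,c_1,\ldots,c_{n-1}$ would occur in the companion cycle too --- the same trick again. So the paper proves the two claims independently, locally, and without invoking Lemma~\ref{lem31}. You instead argue globally: the involution of ${\bf s}$ acting on the two occurrences of the $cr$-invariant $(n-1)$-tuple $w$, the swap/fixed dichotomy with the bookkeeping $(x,y)\mapsto(\overline{y},\overline{x})$, and then a reconciliation of in-${\bf s}$ successors with in-cycle successors that forces you to identify the fixed case with the join pair. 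That delicate point does close: by Lemma~\ref{lem31} the join pair's occurrence in $(S\textbf{\emph{cr}}S)$ straddles a junction with its tail sitting exactly on a symmetry axis, the two axis windows are the only windows fixed by the involution (and one checks they carry the same tuple), so fixed case and join pair coincide, and the CJM swap is by construction the only transition discrepancy. What your route buys is a sharper global picture the paper does not state --- among all $2^{\frac{n-1}{2}}$ $cr$-invariant tails, exactly one (the axis tuple, the tail of the join pair) has its in-${\bf s}$ following bit complementary to its preceding bit; what it costs is a dependence on the full strength of Lemma~\ref{lem31} and on the CJM structure of ${\bf s}$, both of which the paper's two-line local argument avoids entirely.
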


\begin{proof}
If ${\bf v}$ and $\hat{\bf v}$ are both in $S$, then two successors $c_1,\ldots,c_{n-1},c_0$ and $c_1,\ldots,c_{n-1},\overline{c_0}$ of them must be in $S$. Then
\[
cr(c_1,\ldots,c_{n-1},\overline{c_0})=c_0,\overline{c_{n-1}},\ldots,\overline{c_1}=c_0,c_1,\ldots,c_{n-1}={\bf v}
\]
is also in $\textbf{\emph{cr}}S$ and thus $(S\textbf{\emph{cr}}S)$ is not a de Bruijn sequence, a contradiction. Similarly,  ${\bf v}$ and $\hat{\bf v}$ can't be both in $\textbf{\emph{cr}}S$.

Suppose that $0, c_1,\ldots,c_{n-1}$ is in $S$ and the next bit is 1. In this case,  $\textbf{\emph{cr}}S$ also contains
\[
cr(c_1,\ldots,c_{n-1},1)=0,\overline{c_{n-1}},\ldots,\overline{c_1}=0, c_1,\ldots,c_{n-1},
\]
which contradicts to that $(S\textbf{\emph{cr}}S)$ is a de Bruijn sequence. The rest of the proof is similar.
\end{proof}

Now, let us take any given $S$ such that $(S\textbf{\emph{cr}}S)$ is a CR de Bruijn sequence.  Let $({\bf v},\hat{\bf v})$ be any conjugate pair  in $S$ and $\textbf{\emph{cr}}S$ satisfying
\begin{equation}\label{equ:v}
{\bf v}=c_0,c_1,\ldots,c_{n-1}, \ \ \ \  c_i=\overline{c_{n-i}}, i=1,2,\ldots,\frac{n-1}{2}.
\end{equation}

By Lemma~\ref{lem31},  we can use  any such conjugate pair to locate a left shift of  $S$ and a right shift of $\textbf{\emph{cr}}S$  to generate another CR de Bruijn sequence  $(L^t(S)  \textbf{\emph{cr}}L^t(S))=(L^t(S) R^t( \textbf{\emph{cr}}(S)))$.  Different conjugate pairs correspond to different  CR de Bruijn sequences.  This is equivalent to say that we can use CJM to generate  $2^{\frac{n-1}{2}}$ distinct CR de Bruijn sequences using these different conjugate pairs.

\begin{lemma}\label{lem34}
For a given $S$ such that ${\bf s}=(S\textbf{\emph{cr}}S)$ is a CR de Bruijn sequence, we can generate $2^{\frac{n-1}{2}}$ distinct CR de Bruijn sequences using conjugate pairs $({\bf v},\hat{\bf v})$ where ${\bf v}$ has the form in Equation (\ref{equ:v}).
\end{lemma}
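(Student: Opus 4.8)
The plan is to show two things: first, that each admissible conjugate pair $({\bf v},\hat{\bf v})$ with ${\bf v}$ of the form in Equation (\ref{equ:v}) really does produce a CR de Bruijn sequence when used to join $S$ and $\textbf{\emph{cr}}S$; and second, that distinct such pairs produce inequivalent sequences, so that the count is exactly $2^{\frac{n-1}{2}}$. For the first part I would rely directly on Lemma~\ref{lem4} and Lemma~\ref{lem32}: Lemma~\ref{lem32} guarantees that whenever ${\bf v}=c_0,c_1,\ldots,c_{n-1}$ satisfies $c_i=\overline{c_{n-i}}$, the state ${\bf v}$ lies in $S$ precisely when its conjugate $\hat{\bf v}$ lies in $\textbf{\emph{cr}}S$, so such a pair is a genuine cross-cycle conjugate pair available for the CJM. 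Then Lemma~\ref{lem4}, whose hypothesis is exactly that the joining pair have the symmetric form $c_i=\overline{c_{n-i}}$, certifies that interchanging successors yields a de Bruijn sequence ${\bf s}'$ with ${\bf s}'=\textbf{\emph{cr}}{\bf s}'$.

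The counting step is where the real work lies. First I would count the admissible pairs: a state satisfying $c_i=\overline{c_{n-i}}$ for $i=1,\ldots,\frac{n-1}{2}$ has its second half determined by its first half, so there are $2^{\frac{n-1}{2}}$ choices of $(c_1,\ldots,c_{n-1})$, and for each such $(c_1,\ldots,c_{n-1})$ exactly one of the two states $0,c_1,\ldots,c_{n-1}$ and $1,c_1,\ldots,c_{n-1}$ lies in $S$ (this is again Lemma~\ref{lem32}), giving exactly $2^{\frac{n-1}{2}}$ usable conjugate pairs. The heart of the argument is then to verify that different pairs yield pairwise inequivalent CR de Bruijn sequences. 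Here I would invoke Lemma~\ref{lem31}: it shows that every CR de Bruijn sequence determines a \emph{unique} $S$ (containing $0^n$) together with a uniquely determined joining conjugate pair of the form in Equation~(\ref{equ:v}). Reading this bijection in reverse, two distinct admissible pairs cannot both arise from the same CR de Bruijn sequence, since each sequence has only one such pair; hence the map from admissible pairs to CR de Bruijn sequences is injective.

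The step I expect to be the main obstacle is making the injectivity argument airtight. The subtlety is that Lemma~\ref{lem31} fixes a canonical representative by normalizing the position of $0^n$, whereas the sequences $(L^t(S)\,\textbf{\emph{cr}}L^t(S))$ produced by different pairs are a priori only given up to shift equivalence. I would therefore need to check that the normalization is compatible: given a produced sequence, rotate it into the canonical $(S'\,\textbf{\emph{cr}}S')$ form with $0^n\in S'$, and argue that the conjugate pair recovered by Lemma~\ref{lem31}(2) from this canonical form coincides with the admissible pair we started from (or is determined by it), so that two different starting pairs genuinely give sequences that differ even after this renormalization. Once this compatibility is settled, combining the injection with the exact count $2^{\frac{n-1}{2}}$ of admissible pairs completes the proof.
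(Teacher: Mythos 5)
Your proposal is correct and takes essentially the same route as the paper, which proves Lemma~\ref{lem34} implicitly by exactly this combination: Lemma~\ref{lem32} to show each of the $2^{\frac{n-1}{2}}$ admissible pairs straddles $S$ and $\textbf{\emph{cr}}S$, the joining argument of Lemma~\ref{lem4} for the CR property, and the uniqueness statements of Lemma~\ref{lem31} for injectivity. Your normalization concern is resolved just as the paper does it: the produced sequence $(L^t(S)\,\textbf{\emph{cr}}L^t(S))$ already has $0^n$ in its first half, so Lemma~\ref{lem31} recovers precisely the pair used and distinct pairs yield distinct sequences.
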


We note that if ${\bf s}=(S\textbf{\emph{cr}}S)$ is CR de Bruijn sequence, then $\textbf{\emph{r}}S$ also contains all zeroes state and
\[
((\textbf{\emph{r}}S)\textbf{\emph{cr}}(\textbf{\emph{r}}S))=(\textbf{\emph{r}}S\textbf{\emph{c}}S)=(\textbf{\emph{c}}S\textbf{\emph{r}}S)=\textbf{\emph{r}}{\bf s}
\]
is also a CR de Bruijn sequence.
When $n=3$, there are two CR de Bruijn sequences: ${\bf s}=(10001110)$ and $r{\bf s}=(00010111)$. In particular,  $S=(0001)$ and $ \textbf{\emph{r}}S=(1000)=(0001)=S$.
But for $n\geq 5$, $\textbf{\emph{r}}S$ is not equivalent to $S$.

\begin{lemma}\label{lem35}
Suppose that $n\geq 5$ and ${\bf s}=(S \textbf{\emph{cr}}S)$ is a CR de Bruijn sequence. Then $\textbf{\emph{r}}S$ is inequivalent to $S$.
\end{lemma}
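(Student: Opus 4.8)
The plan is to argue by contradiction: assume $\textbf{\emph{r}}S$ is shift equivalent to $S$ and show this forces $n=3$. First I would translate the hypothesis into a symmetry of $S$ viewed as a cyclic sequence of period $N=2^{n-1}$. Writing $S=(s_0,\ldots,s_{N-1})$, the relation $\textbf{\emph{r}}S=L^t S$ for some $t$ is equivalent to $s_{a-j}=s_j$ for all $j$ modulo $N$, where $a\equiv t-1$; that is, $S$ is invariant under the reflection $j\mapsto a-j$ of $\mathbb{Z}/N$. Throughout I would use the facts, already available from the set-up and Lemma~\ref{lem31}, that the $2^{n-1}$ states (cyclic windows of length $n$) of $S$ are pairwise distinct and form a transversal of the $\textbf{\emph{cr}}$-pairs $\{\mathbf{v},\textbf{\emph{cr}}\mathbf{v}\}$, and that $0^n$ is among them.

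Next I would count the palindromic states of $S$ in two different ways. For the first count, observe that a state $\mathbf{p}$ is a palindrome (i.e. $\textbf{\emph{r}}\mathbf{p}=\mathbf{p}$) exactly when $\textbf{\emph{cr}}\mathbf{p}=\textbf{\emph{c}}\mathbf{p}=\overline{\mathbf{p}}$; hence the palindromes split into $\textbf{\emph{cr}}$-pairs $\{\mathbf{p},\overline{\mathbf{p}}\}$. Since $n$ is odd there are $2^{(n+1)/2}$ palindromic states, giving $2^{(n-1)/2}$ such pairs, and because $S$ meets each $\textbf{\emph{cr}}$-pair in exactly one state, $S$ contains exactly $2^{(n-1)/2}$ palindromic states.

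For the second count I would use the reflection symmetry. A short computation shows that for the cyclic windows $\mathbf{w}_i=(s_i,\ldots,s_{i+n-1})$ one has $\textbf{\emph{r}}\mathbf{w}_i=\mathbf{w}_{a-i-(n-1)}$; since the windows are distinct, $\mathbf{w}_i$ is a palindrome iff $2i\equiv a-(n-1)\pmod N$. As $0^n\in S$ is a palindrome this congruence is solvable, so (recalling $N=2^{n-1}$ and $n-1$ even) $a$ is even and it has exactly two solutions modulo $N$. Thus $S$ has exactly $2$ palindromic windows. Comparing with the first count gives $2^{(n-1)/2}=2$, forcing $n=3$, contrary to $n\geq 5$. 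Therefore $\textbf{\emph{r}}S$ is inequivalent to $S$.

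I expect the main obstacle to be the careful bookkeeping in the second count: verifying that $\textbf{\emph{r}}S\sim S$ really yields the reflection identity $s_{a-j}=s_j$, that the induced map on window starting positions is $i\mapsto a-i-(n-1)$, and that the distinctness of the $2^{n-1}$ windows lets me pass from an equality of windows to a single linear congruence whose number of solutions I can pin down by a parity argument. The first count is routine once one notes that $\textbf{\emph{cr}}$ acts on palindromes as complementation; the real content is matching this combinatorial transversal count against the geometric fixed-point count, which is precisely where the hypothesis $n\geq 5$ (equivalently $2^{(n-1)/2}>2$) enters.
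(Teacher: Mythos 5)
Your proof is correct and takes essentially the same route as the paper: both count palindromic states of $S$ two ways, getting exactly $2^{\frac{n-1}{2}}$ from the fact that $S$ is a transversal of the $\textbf{\emph{cr}}$-pairs (palindromes pair off as $\{\mathbf{p},\overline{\mathbf{p}}\}$) and at most two from the reflection symmetry forced by $S\sim \textbf{\emph{r}}S$, a contradiction for $n\geq 5$. The only difference is presentational: where the paper cites Mayhew for the fact that a symmetric cycle has only two symmetric positions, you derive it explicitly from distinctness of the windows via the congruence $2i\equiv a-(n-1)\pmod{N}$.
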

\begin{proof}
When $n\geq 5$ is odd, there are $2^{\frac{n+1}{2}}$ symmetric states $a_0,a_1,\ldots,a_{n-1}$ in ${\bf s}=(S\textbf{\emph{cr}}S)$ satisfying $a_i=a_{n-1-i}, i=0,1,\ldots,\frac{n-1}{2}-1$.
Because $\textbf{\emph{r}}(a_0,a_1,\ldots,a_{n-1})$ is still symmetric,  there are $2^{\frac{n-1}{2}}\geq 4$ symmetric states in $S$.
If $S=\textbf{\emph{r}}S$, then $S$ must be symmetric, which is impossible because in $S$ there exist only two symmetric positions: the beginning (or end) and the midpoint (see  \cite{Mayhew94}). Hence the proof is complete.
\end{proof}

\iffalse
\begin{proof}
The proof is similar with the proof of ${\bf s}\neq r{\bf s}$ in \cite{Mayhew94}.
There are $2^{\frac{n+1}{2}}$ symmetric states $c_0,c_1,\ldots,c_{n-1}$ in ${\bf s}$ satisfying $c_i=c_{n-1-i}, i=0,1,\ldots,\frac{n-1}{2}-1$.
Because $cr(c_0,c_1,\ldots,c_{n-1})$ is still symmetric,  there are $2^{\frac{n-1}{2}}$ symmetric states in $S$.
When $n\geq 5$, $2^{\frac{n-1}{2}}\geq4$.
If $S=rS$, then $S$ must be symmetric. However, in each sequence there exist only two symmetric positions: the beginning (or end) and the midpoint. \edit{revise the last sentence}
\end{proof}
\fi

Some operators on de Bruijn sequences are defined in \cite{Etzion84}. %Let us recall them firstly.
For a de Bruijn sequence ${\bf s}$, let $\textbf{\emph{z}}{\bf s}$ (resp. $\textbf{\emph{u}}{\bf s}$) denote the sequence obtained from ${\bf s}$, by interchanging the positions of the unique runs of $n$ and $n-2$ zeroes (resp. ones). One can check that
$\textbf{\emph{z}}{\bf s}$ and $\textbf{\emph{u}}{\bf s}$ are de Bruijn sequences. Furthermore,
 let $\textbf{\emph{d}}{\bf s}$ denote the sequence obtained from ${\bf s}$ by deleting two zeroes and two ones from the unique runs of $n$ zeroes and $n$ ones respectively.
For example, for ${\bf s}=(0000111101100101)$, we obtain
\begin{align*}
\textbf{\emph{z}}{\bf s}&=(0011110110000101),\\
\textbf{\emph{u}}{\bf s}&=(0000110111100101),\\
\textbf{\emph{uz}}{\bf s}=\textbf{\emph{zu}}{\bf s}&=(0011011110000101),\\
\textbf{\emph{d}}{\bf s}&=(001101100101).
\end{align*}

Here we extend the operator $\textbf{\emph{d}}$ to general sequence $S$ and let $\textbf{\emph{d}}S$ denote the sequence obtained from $S$ by deleting two zeroes and two ones from the unique runs of $n$ zeroes and $n$ ones respectively when these two runs exist.

According to Lemma 4 in \cite{Etzion84}, we can deduce the following result directly.
\begin{lemma}\label{lem333}
Suppose that $n\geq 5$ and ${\bf s}=(S\textbf{\emph{cr}}S)$ is a CR de Bruijn sequence.
\begin{enumerate}
  \item ${\bf s}'=\textbf{\emph{uz}}{\bf s}=(S'\textbf{\emph{cr}}S')$ for some $S'$ containing state $0^n$ is also a CR de Bruijn sequence.
  \item $\textbf{\emph{d}}{\bf s}=\textbf{\emph{d}}{\bf s}'$ is a $CR$-sequence.
  \item $\textbf{\emph{d}}S'$ is equal to $\textbf{\emph{d}}S$ or $\textbf{\emph{dcr}}S$.
  \item $S'$ is determined from $S$ uniquely and vice versa.
  \item $S$, $\textbf{\emph{r}}S$, $S'$ and $\textbf{\emph{r}}S'$ are pairwise inequivalent.
\end{enumerate}
\end{lemma}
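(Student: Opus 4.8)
The plan is to treat the five assertions as consequences of two ingredients: the structural facts about $\textbf{\emph{z}}$, $\textbf{\emph{u}}$, $\textbf{\emph{d}}$ supplied by Lemma~4 of \cite{Etzion84}, and the elementary commutation relations among the operators $\textbf{\emph{c}}$, $\textbf{\emph{r}}$, $\textbf{\emph{z}}$, $\textbf{\emph{u}}$, $\textbf{\emph{d}}$. First I would record these relations. Since $\textbf{\emph{c}}$ interchanges the roles of $0$ and $1$, it swaps the two run-exchange operators, $\textbf{\emph{c}}\textbf{\emph{z}}=\textbf{\emph{u}}\textbf{\emph{c}}$ and $\textbf{\emph{c}}\textbf{\emph{u}}=\textbf{\emph{z}}\textbf{\emph{c}}$; since $\textbf{\emph{r}}$ preserves the length and monochromaticity of every run it commutes with both, $\textbf{\emph{r}}\textbf{\emph{z}}=\textbf{\emph{z}}\textbf{\emph{r}}$ and $\textbf{\emph{r}}\textbf{\emph{u}}=\textbf{\emph{u}}\textbf{\emph{r}}$; the operators $\textbf{\emph{z}}$ and $\textbf{\emph{u}}$ are commuting involutions (as the example $\textbf{\emph{uz}}{\bf s}=\textbf{\emph{zu}}{\bf s}$ already indicates); and $\textbf{\emph{d}}$ is invariant under $\textbf{\emph{z}}$ and $\textbf{\emph{u}}$ (relocating a long run and then shortening it yields the same result) and commutes with $\textbf{\emph{c}}$ and $\textbf{\emph{r}}$. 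I would also use the fact, recorded just before the lemma, that $\textbf{\emph{z}}{\bf s}$, $\textbf{\emph{u}}{\bf s}$ and hence $\textbf{\emph{uz}}{\bf s}$ are de Bruijn sequences.

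For (1), combining the relations gives $\textbf{\emph{cr}}(\textbf{\emph{uz}}{\bf s})=\textbf{\emph{uz}}(\textbf{\emph{cr}}{\bf s})=\textbf{\emph{uz}}{\bf s}$, so ${\bf s}'=\textbf{\emph{uz}}{\bf s}$ is a CR de Bruijn sequence; by Lemma~\ref{lemma1} it can be written $(S'\textbf{\emph{cr}}S')$, and the normalization $0^n\in S'$ together with Lemma~\ref{lem31}(3) pins down $S'$ uniquely. For (2), invariance of $\textbf{\emph{d}}$ under $\textbf{\emph{z}},\textbf{\emph{u}}$ yields $\textbf{\emph{d}}{\bf s}=\textbf{\emph{d}}{\bf s}'$, and $\textbf{\emph{d}}(\textbf{\emph{cr}}{\bf s})=\textbf{\emph{cr}}(\textbf{\emph{d}}{\bf s})$ together with ${\bf s}=\textbf{\emph{cr}}{\bf s}$ shows $\textbf{\emph{d}}{\bf s}$ is a $CR$-sequence (note it is not de Bruijn, its period being $2^n-4$).

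For (3) I would first locate the long runs: since $0^n$ is a state of $S$ and a run of $n$ equal bits cannot straddle either junction of $(S\textbf{\emph{cr}}S)$, the run of $n$ zeroes lies inside $S$ and the run of $n$ ones inside $\textbf{\emph{cr}}S$; hence $\textbf{\emph{d}}(S\textbf{\emph{cr}}S)=(\textbf{\emph{d}}S)(\textbf{\emph{d}}\,\textbf{\emph{cr}}S)=(\textbf{\emph{d}}S)(\textbf{\emph{cr}}\,\textbf{\emph{d}}S)$, so $\textbf{\emph{d}}S$ is a half of $\textbf{\emph{d}}{\bf s}$, and likewise $\textbf{\emph{d}}S'$ is a half of $\textbf{\emph{d}}{\bf s}'=\textbf{\emph{d}}{\bf s}$. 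A $CR$-sequence has only the two halves $T$ and $\textbf{\emph{cr}}T$, so $\textbf{\emph{d}}S'\in\{\textbf{\emph{d}}S,\textbf{\emph{cr}}\,\textbf{\emph{d}}S\}=\{\textbf{\emph{d}}S,\textbf{\emph{dcr}}S\}$. For (4), since $\textbf{\emph{uz}}$ is an involution the assignment ${\bf s}\mapsto{\bf s}'$, and therefore $S\mapsto S'$, is its own inverse, so $S$ and $S'$ determine one another.

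The main obstacle is (5). The six pairs reduce to two genuine inequivalences, because $\textbf{\emph{r}}$ preserves shift-equivalence: $S\not\sim\textbf{\emph{r}}S$ and $S'\not\sim\textbf{\emph{r}}S'$ are Lemma~\ref{lem35} (applied to ${\bf s}$ and to ${\bf s}'$), while $\textbf{\emph{r}}S\sim\textbf{\emph{r}}S'\Leftrightarrow S\sim S'$ and $S\sim\textbf{\emph{r}}S'\Leftrightarrow\textbf{\emph{r}}S\sim S'$, leaving only $S\not\sim S'$ and $S\not\sim\textbf{\emph{r}}S'$ to prove. The delicate point is that these cannot be inferred from ${\bf s}\not\sim{\bf s}'$ alone: by Lemma~\ref{lem34} shift-equivalent halves may still yield inequivalent CR de Bruijn sequences, so the implication ``equivalent halves $\Rightarrow$ equivalent sequences'' fails, and the inequivalence must be established directly at the level of $S$. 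I would do so structurally: $S'$ is $S$ with its unique runs of $n$ and $n-2$ zeroes interchanged (the $\textbf{\emph{u}}$ part of $\textbf{\emph{uz}}$ acts only inside $\textbf{\emph{cr}}S$), so an equivalence $S\sim S'$ would force the two runs to be antipodal, at cyclic distance $2^{n-2}$, with realizing rotation by $2^{n-2}$; this would impose on $S$ a near-period $2^{n-2}$ away from the runs, incompatible with $S$ carrying $2^{n-1}$ distinct states once $n\geq 5$. The same antipodality-and-rotation analysis, applied after reversing, disposes of $S\not\sim\textbf{\emph{r}}S'$. I expect this run-tracking argument, rather than the operator algebra, to be where the real work lies, and it is precisely the structural content that Lemma~4 of \cite{Etzion84} is meant to supply.
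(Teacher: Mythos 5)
Your items (1), (2) and (4) are correct, and your operator-algebra route ($\textbf{\emph{cr}}\,\textbf{\emph{uz}}=\textbf{\emph{uz}}\,\textbf{\emph{cr}}$ plus involutivity of $\textbf{\emph{uz}}$ and the uniqueness in Lemma~\ref{lem31}(3)) is cleaner than the paper's, which argues throughout by tracking the four long runs. But in (5) you have a genuine gap: you silently assume the run of $n-2$ zeroes lies in $S$. The paper's proof is organized precisely around the dichotomy you skip: either the run of $n-2$ zeroes is inside $S$ (then $\textbf{\emph{z}}$ acts within $S$, $\textbf{\emph{u}}$ within $\textbf{\emph{cr}}S$, so $S'$ is $S$ with its two zero-runs interchanged and $\textbf{\emph{d}}S'=\textbf{\emph{d}}S$), or it lies in $\textbf{\emph{cr}}S$ (equivalently the run of $n-2$ ones lies in $S$), in which case $\textbf{\emph{uz}}$ moves the long runs \emph{across} the two halves and $S'$ is obtained from $\textbf{\emph{cr}}S$ by deleting two ones from its run of $n$ ones and inserting two zeroes into its run of $n-2$ zeroes, giving $\textbf{\emph{d}}S'=\textbf{\emph{dcr}}S$. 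Your parenthetical ``the $\textbf{\emph{u}}$ part of $\textbf{\emph{uz}}$ acts only inside $\textbf{\emph{cr}}S$'' is true only in the first case; in the second, $S\not\sim S'$ requires comparing $S$ with a run-modification of $\textbf{\emph{cr}}S$ (and $S\not\sim\textbf{\emph{r}}S'$ with one of $\textbf{\emph{c}}S$), where your antipodality-and-rotation analysis does not even get started. Since your own item (3) admits both values of $\textbf{\emph{d}}S'$, this omission is internal to your write-up, not merely a divergence from the paper.

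Two smaller soft spots. In (3) you assert that ``a CR-sequence has only the two halves $T$ and $\textbf{\emph{cr}}T$'': the uniqueness of the cut is proved in Lemma~\ref{lem31} only for de Bruijn sequences (it uses that every $n$-state occurs exactly once), and $\textbf{\emph{d}}{\bf s}$, of period $2^n-4$, is not de Bruijn, so a priori it could carry a nontrivial rotational symmetry and hence extra admissible cuts; the paper avoids this by locating the new cut directly from where $\textbf{\emph{uz}}$ deposits the run of $n$ zeroes, which is what yields the exact (not merely shift-equivalent) identities in (3). And even in your case of (5), the concluding contradiction is only sketched: from $S'=L^kS$ and the fact that $\textbf{\emph{z}}$ restricted to $S$ is an involution commuting with shifts one does get $L^{2k}S=S$, hence $k=2^{n-2}$ since $S$ has least period $2^{n-1}$; but the final step needs an arc of length at least $n$ on which $L^{2^{n-2}}S$ agrees verbatim with $S$, and for $n=5$ the two arcs between the zero-runs have total length only $8$, so the estimate must be handled with care rather than waved at. Your reduction of (5) to the four pairs via $\textbf{\emph{r}}$-equivariance, the junction observation in (3), and especially your correct insight that inequivalence of the full de Bruijn sequences does not descend to their halves (because of Lemma~\ref{lem34}) are all sound and worth keeping; what is missing is the second half of the case analysis that forms the backbone of the paper's proof.
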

\begin{proof}
  If $S$ contains the run of  $n-2$ zeros, then  ${\bf s'}$ is obtained from ${\bf s}$ by interchanging the positions of the runs of  $n$ zeros and $n-2$ zeros in $S$
and the positions of the runs of $n$ ones and $n-2$ one in $\textbf{\emph{dcr}}S$. In this case,  $\textbf{\emph{d}}S' = \textbf{\emph{d}}S$ and $S'$ is obtained from $S$
by interchanging the positions of the runs of  $n$ zeros and $n-2$ zeros in $S$.
Otherwise,  ${\bf s'}$ is obtained from ${\bf s}$ by interchanging the positions of the runs of  $n$ zeros in $S$ and $n-2$ zeros in $\textbf{\emph{dcr}}S$,
and the positions of the runs of $n$ ones  in $\textbf{\emph{dcr}}S$ and $n-2$ ones in $S$.  In this case, $\textbf{\emph{d}}S' = \textbf{\emph{dcr}}S$ and $S'$ is obtained from
$\textbf{\emph{cr}}S$ by deleting two ones in the run of $n$ ones and inserting two zeros in the run of $n-2$ zeros.  The rest of proof follows immediately.
\end{proof}

Combining the above lemmas, we obtain the following result on the number of de Bruijn sequences being $CR$-sequences.

\begin{theorem}
The number of CR de Bruijn sequences is a multiple of   $2^{\frac{n+3}{2}}$ for odd order $n\geq5$.
\end{theorem}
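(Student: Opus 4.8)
The plan is to establish the divisibility by $2^{\frac{n+3}{2}}$ by exhibiting a collection of group actions (or involutions) on the set of CR de Bruijn sequences and showing that the orbits have a uniform size, or at least a size divisible by the required power of two. First I would fix a CR de Bruijn sequence ${\bf s}=(S\textbf{\emph{cr}}S)$ with $0^n$ in $S$, which is the canonical normalization guaranteed by Lemma~\ref{lem31}. The key structural input is Lemma~\ref{lem34}, which says that from this single fixed $S$ one can generate $2^{\frac{n-1}{2}}$ \emph{distinct} CR de Bruijn sequences by choosing different conjugate pairs $({\bf v},\hat{\bf v})$ of the symmetric form in Equation~(\ref{equ:v}). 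So each $S$ already contributes a factor of $2^{\frac{n-1}{2}}$.

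The remaining factor of $2^{\frac{n+3}{2}}/2^{\frac{n-1}{2}}=2^{2}=4$ should come from grouping the underlying cycles $S$ themselves into classes of size $4$. This is exactly what Lemma~\ref{lem35} and Lemma~\ref{lem333} are designed to supply: given one admissible $S$, the four sequences $S$, $\textbf{\emph{r}}S$, $S'$, and $\textbf{\emph{r}}S'$ are \emph{pairwise inequivalent} (part (5) of Lemma~\ref{lem333}), where $S'$ is obtained from $S$ via the $\textbf{\emph{uz}}$ operation and is uniquely determined by $S$ and conversely (part (4)). Thus I would argue that the set of admissible cycles $S$ (those containing $0^n$ with $(S\textbf{\emph{cr}}S)$ a CR de Bruijn sequence) partitions into quadruples $\{S,\textbf{\emph{r}}S,S',\textbf{\emph{r}}S'\}$. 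The core of the argument is to verify that this really is a partition into sets of size exactly $4$: one must check that applying the four operations $\{\text{id},\textbf{\emph{r}},(\cdot)',\textbf{\emph{r}}(\cdot)'\}$ again to any member of a quadruple reproduces the same quadruple, i.e.\ that the operations $\textbf{\emph{r}}$ and $(\cdot)'$ generate a group of order $4$ (a Klein four-group) acting freely on the admissible $S$'s.

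Putting the two factors together, the total count is $(\text{number of quadruples}) \times 4 \times 2^{\frac{n-1}{2}}$, where the $4$ comes from the four inequivalent cycles in each quadruple and the $2^{\frac{n-1}{2}}$ comes from the conjugate-pair choices attached to each cycle via Lemma~\ref{lem34}. One must take care that these two counting factors do not overlap: the $2^{\frac{n-1}{2}}$ de Bruijn sequences built from a single $S$ all share the same $S$ in their normalized form, so distinct quadruples and distinct conjugate-pair choices yield genuinely distinct CR de Bruijn sequences. This yields that the number of CR de Bruijn sequences is a multiple of $4\cdot 2^{\frac{n-1}{2}}=2^{\frac{n+3}{2}}$, as claimed.

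The main obstacle I anticipate is the freeness of the Klein four-group action, i.e.\ confirming that the four cycles $S$, $\textbf{\emph{r}}S$, $S'$, $\textbf{\emph{r}}S'$ are genuinely distinct and that no admissible $S$ is fixed by a nontrivial combination of $\textbf{\emph{r}}$ and $(\cdot)'$. Lemma~\ref{lem35} handles $S\neq \textbf{\emph{r}}S$ for $n\geq 5$, and Lemma~\ref{lem333}(5) asserts the full pairwise inequivalence, so the heavy lifting is already packaged into those lemmas; the remaining care is bookkeeping, ensuring the group really closes up (e.g.\ that $\textbf{\emph{r}}(S')=(\textbf{\emph{r}}S)'$ or the analogous commutation, so that no fifth distinct cycle is produced). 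Provided the operations $\textbf{\emph{r}}$ and $(\cdot)'$ commute and each is an involution on the admissible set, the orbit sizes are all exactly $4$ and the divisibility follows cleanly.
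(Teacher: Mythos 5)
Your proposal matches the paper's own proof essentially step for step: the paper likewise normalizes each CR de Bruijn sequence to a unique $S$ containing $0^n$ (Lemma~\ref{lem31}), counts $2^{\frac{n-1}{2}}$ distinct sequences per underlying cycle via the conjugate pairs of Lemma~\ref{lem34}, and obtains the remaining factor of $4$ from the pairwise inequivalence of $S$, $\textbf{\emph{r}}S$, $S'$, $\textbf{\emph{r}}S'$ given by Lemma~\ref{lem333}(5), concluding that each CR de Bruijn sequence sits in a family of exactly $4\times 2^{\frac{n-1}{2}}=2^{\frac{n+3}{2}}$ distinct ones. The extra bookkeeping you flag---that $\textbf{\emph{r}}$ and $(\cdot)'$ are commuting involutions so the quadruples genuinely partition the admissible cycles---is the implicit content of Lemma~\ref{lem333}(3)--(4), which the paper's terse proof uses without spelling out.
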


\begin{proof}
By    Lemma \ref{lem31}, the CR de Bruijn sequence ${\bf s}$   can be written as $(S\textbf{\emph{cr}}S)$ for a unique $S$ containing all zeroes state. For such a given $S$ satisfying $(S\textbf{\emph{cr}}S)$ being CR de Bruijn sequence, by Lemma \ref{lem34},
we can generate $2^{\frac{n-1}{2}}$ distinct CR de Bruijn sequences. Furthermore, by Lemma \ref{lem333},   $S$, $\textbf{\emph{r}}S$, $S'$ and $\textbf{\emph{r}}S'$ are pairwise inequivalent when $n\geq5$.   Therefore,  if there is a CR de Bruijn sequence ${\bf s}=(S\textbf{\emph{cr}}S)$, then we can generate exact $4\times2^{\frac{n-1}{2}}=2^{\frac{n+3}{2}}$ distinct CR de Bruijn sequences.
\end{proof}

Now let us discuss the linear complexities of CR de Bruijn sequences. Generally speaking, joining $S$ and $crS$ using distinct conjugate pairs will generate CR de Bruijn sequences with distinct linear complexities. For example, let $n=5$ and $S=(1011101100000110)$. This example was first considered in \cite{Fred82}.  Using two distinct conjugate pairs $({\bf v}_1,\hat{\bf v}_1)$ and $({\bf v}_2,\hat{\bf v}_2)$ where ${\bf v}_1=11010$ and ${\bf v}_2=00011$ respectively,  we obtain the resulting sequences
\[
(1011101100000110,1001111100100010)
\]
and
\[
(1101011101100000,1111100100010100)
\]
with linear complexities 23 and 29 respectively. However,  we show next  the CR de Bruijn sequences obtained  by joining $S$ and $\textbf{\emph{cr}}S$ as in Lemma~\ref{lem34} will have the same linear complexity  when ${\bf s}=(S\textbf{\emph{cr}}S)$ has  the maximal linear complexity $2^n-1$.

\begin{lemma}\label{lem36}
Suppose that $n\geq5$ and ${\bf s}=(S \textbf{\emph{cr}}S)$ is a CR de Bruijn sequence with linear complexity $2^n-1$, then all the CR de Bruijn sequences generated by joining $S$ and $\textbf{\emph{cr}}S$ have the same linear complexity.
\end{lemma}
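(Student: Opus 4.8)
The plan is to translate linear complexity into the $(x+1)$-adic valuation of the period polynomial and then show that every sequence produced from the fixed pair $S,\textbf{\emph{cr}}S$ agrees with ${\bf s}=(S\textbf{\emph{cr}}S)$ modulo $(x+1)^2$. For a $2^n$-periodic sequence ${\bf a}$ write $A(x)=\sum_{i=0}^{2^n-1}a_ix^i\in\F_2[x]$; since $x^{2^n}+1=(x+1)^{2^n}$ over $\F_2$, the linear complexity of ${\bf a}$ equals $2^n-v$, where $v$ is the multiplicity of $(x+1)$ dividing $A(x)$. For any de Bruijn sequence of order $n\geq 2$ the weight is $2^{n-1}$, which is even, so $A(1)=0$ and $v\geq 1$; hence the maximal value $2^n-1$ is attained exactly when $v=1$, i.e. when $A(x)\not\equiv 0\pmod{(x+1)^2}$. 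By Lemma~\ref{lem34} the sequences to be compared are ${\bf s}^{(t)}=(L^t(S)\,R^t(\textbf{\emph{cr}}S))$ together with ${\bf s}={\bf s}^{(0)}$; denoting their period polynomials by $F_t(x)$, if I can prove $F_t(x)\equiv F_0(x)\pmod{(x+1)^2}$ for every relevant $t$, then by the ultrametric property of the valuation the fact that $v=1$ for $F_0$ forces $v=1$ for every $F_t$, so all the sequences attain the same linear complexity $2^n-1$.

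Next I would reduce the congruence modulo $(x+1)^2$ to a single parity condition. A polynomial $P$ over $\F_2$ is determined modulo $(x+1)^2$ by the two bits $P(1)=\sum_i p_i$ and $P'(1)=\sum_{i\text{ odd}}p_i$. The first bit is the weight parity, which is $0$ for every de Bruijn sequence and therefore agrees for all $t$. Thus everything comes down to showing that the parity of the number of ones sitting in odd positions is the same for ${\bf s}^{(t)}$ and for ${\bf s}^{(0)}$.

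Finally I would compute this parity directly. For a block $B=(b_0,\ldots,b_{N-1})$ of even length $N=2^{n-1}$, set $\mathrm{odd}(B)=\sum_{j\text{ odd}}b_j$ and $\mathrm{even}(B)=\sum_{j\text{ even}}b_j$ over $\F_2$; because $N$ is even, the odd global positions of a concatenation of two such blocks are exactly the odd positions inside each block, so the odd-position count of ${\bf s}^{(t)}$ equals $\mathrm{odd}(L^t(S))+\mathrm{odd}(R^t(\textbf{\emph{cr}}S))$. Using that $N/2=2^{n-2}$ is even (valid since $n\geq 3$) together with $(\textbf{\emph{cr}}S)_j=\overline{s_{N-1-j}}$, I obtain $\mathrm{odd}(\textbf{\emph{cr}}S)=\mathrm{even}(S)$ and $\mathrm{even}(\textbf{\emph{cr}}S)=\mathrm{odd}(S)$, so the odd-position count of ${\bf s}^{(0)}$ is $\mathrm{odd}(S)+\mathrm{even}(S)=wt(S)\pmod 2$. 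For ${\bf s}^{(t)}$ one tracks how the index $(t+k)\bmod N$ changes parity: since $N$ is even, a left shift by $t$ preserves local position parity when $t$ is even and reverses it when $t$ is odd, and similarly $R^t$ behaves on $\textbf{\emph{cr}}S$. Hence for $t$ even the count is $\mathrm{odd}(S)+\mathrm{odd}(\textbf{\emph{cr}}S)$, while for $t$ odd it becomes $\mathrm{even}(S)+\mathrm{even}(\textbf{\emph{cr}}S)$; in both cases the total simplifies to $wt(S)\pmod 2$, independent of $t$. This yields $F_t\equiv F_0\pmod{(x+1)^2}$ and completes the argument.

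I would expect the only delicate point to be this last bookkeeping of position parities under $L^t$, $R^t$, and the complement--reverse, where the evenness of both $N$ and $N/2$ is precisely what makes the correction terms $N/2$ vanish modulo $2$; the conceptual step, namely reducing the entire claim to equality modulo $(x+1)^2$ through the valuation description of linear complexity, is what keeps the proof short.
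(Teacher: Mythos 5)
Your proposal is correct and is essentially the paper's own argument: the paper invokes Etzion and Lempel's subparity criterion (Lemma 1 of \cite{Etzion84}) for linear complexity $2^n-1$, which is precisely your condition $F_t'(1)=1$ extracted from the $(x+1)$-adic description of linear complexity, and your bookkeeping showing the odd-position parity of $(L^t(S)\textbf{\emph{cr}}L^t(S))$ equals $wt(S) \bmod 2$ independently of $t$ is the same computation the paper carries out via $sp({\bf s}')=\sum_{i}s_i+2^{n-2}=1$. The only difference is that you rederive the cited subparity criterion from the valuation of the generating polynomial modulo $(x+1)^2$ instead of quoting it, which makes the write-up self-contained but does not change the route.
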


\begin{proof}
By  Lemma 1 in \cite{Etzion84}, a binary sequence ${\bf a}=(a_0,a_1,\ldots,a_{2^n-1})$ with period $2^n$ has the linear complexity $2^n-1$ if and only if the subparity
\[
sp({\bf a})=a_0+a_2+\cdots+a_{2^n-2}=a_1+a_3+\cdots+a_{2^n-1}= 1.
\]
Suppose that $S=(s_0,s_1,\ldots,s_{2^{n-1}-1})$ and
\[
{\bf s}=(S\textbf{\emph{cr}}S)=(s_0,s_1,\ldots,s_{2^{n-1}-1},\overline{s_{2^{n-1}-1}},\ldots,\overline{s_1},\overline{s_0})
\]
is a CR de Bruijn sequence with linear complexity $2^n-1$.  Then
\[
sp({\bf s})=\sum_{i=0}^{2^{n-2}-1}s_{2i}+\sum_{i=0}^{2^{n-2}-1}\overline{s_{2i+1}}=\sum_{i=0}^{2^{n-2}-1}s_{2i+1}+\sum_{i=0}^{2^{n-2}-1}\overline{s_{2i}}=\sum_{i=0}^{2^{n-1}-1}s_i+2^{n-2}=1,
\]
{\it i.e.}, $s_0+s_1+\cdots+s_{2^{n-1}-1}=1$.

If we join $S$ and $\textbf{\emph{cr}}S$ to generate another CR de Bruijn sequence ${\bf s}'$, then it must have the following form:
\[
(L^t(S)\textbf{\emph{cr}}L^t(S))=(s_t,\ldots,s_{2^{n-1}-1},s_0,\ldots,s_{t-1},\overline{s_{t-1}},\ldots,\overline{s_0},\overline{s_{2^{n-1}-1}},\ldots,\overline{s_t}),
\]
for some $t$, and
\[
sp({\bf s}')=\sum_{i=0}^{2^{n-2}-1}s_{t+2i}+\sum_{i=0}^{2^{n-2}-1}\overline{s_{t-1-2i}}=\sum_{i=0}^{2^{n-1}-1}s_i+2^{n-2}=1,
\]
hence the linear complexity of ${\bf s}'$ is also $2^n-1$.
\end{proof}

It is a classical problem to study the distribution of   de Bruijn sequences with given linear complexity.   Let  $\gamma(c,n)$ be the number of order $n$ de Bruijn sequences with linear complexity $c$.  It is well known that  $\gamma(c,n) \equiv 0~({\rm mod}\ 2)$ for $n\geq3$  and  $\gamma(c,n) \equiv 0~({\rm mod}\ 4)$ for even $n\geq 4$. The latter works because
${\bf s}$,  $\textbf{\emph{c}}{\bf s}$, $\textbf{\emph{r}}{\bf s}$, $\textbf{\emph{cr}}{\bf s}$ are pairwise inequivalent de Bruijn sequences of the same complexity for even $n$.

In 1982, Chan et al. \cite{Chan82} conjectured  $\gamma(c,n) \equiv 0~({\rm mod}\ 4)$.  In \cite{Etzion84} Etzion and Lempel proved the absence of CR de Bruijn sequences of even complexity and thus confirmed this conjecture for all $c$ and $n$ such that $cn$ is even. However,  the argument does not work when $cn$ is odd and this conjecture is still open.

On the other hand, when the linear complexity is the maximum $2^n-1$, Etzion and Lempel proved the following stronger result.

\begin{theorem}\cite[Theorems 1 and 2]{Etzion84}\label{thm:222}
For $n \geq 4$,  $\gamma(2^k-1, n)\equiv 0~({\rm mod}\ 8)$.
For $k\geq 3$, $\gamma(2^k-1,2k)\equiv 0~({\rm mod}\ 16)$.
\end{theorem}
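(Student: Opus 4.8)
The plan is to realize the set $D$ of shift--inequivalence classes of order $n$ de Bruijn sequences of the prescribed complexity as a set acted on by a $2$--group, so that the orbit--stabilizer theorem forces $|D|$ to be divisible by a power of $2$ (throughout write $\sim$ for shift--equivalence). Because a de Bruijn sequence of order $m$ has complexity in the range $[2^{m-1}+m,\,2^{m}-1]$, the only complexity of the form $2^{k}-1$ it can attain is the maximum $2^{m}-1$; accordingly both parts concern maximal complexity, and I may use the subparity criterion (Lemma~1 in \cite{Etzion84}): a period $2^{m}$ sequence ${\bf a}$ has complexity $2^{m}-1$ if and only if $sp({\bf a})=1$. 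The acting group is generated by the four involutions already available: the complement $\textbf{\emph{c}}$, the reverse $\textbf{\emph{r}}$, and the run--interchange maps $\textbf{\emph{z}}$ and $\textbf{\emph{u}}$. A short inspection of the defining actions gives $\textbf{\emph{z}}\,\textbf{\emph{u}}=\textbf{\emph{u}}\,\textbf{\emph{z}}$ and $\textbf{\emph{c}}\,\textbf{\emph{z}}\,\textbf{\emph{c}}=\textbf{\emph{u}}$ (complementing carries runs of zeroes to runs of ones), while $\textbf{\emph{r}}$ is central; hence $H:=\langle\textbf{\emph{c}},\textbf{\emph{z}},\textbf{\emph{u}}\rangle$ has order $8$ and $G:=\langle\textbf{\emph{c}},\textbf{\emph{r}},\textbf{\emph{z}},\textbf{\emph{u}}\rangle$ has order $16$.

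First I would verify that each generator preserves the maximum complexity, so that $G$ really acts on $D$. For $\textbf{\emph{r}}$ this is clear, since the minimal polynomial of a period $2^{m}$ sequence is a power $(x+1)^{c}$ of the self--reciprocal $x+1$, and reversal replaces it by its reciprocal. For $\textbf{\emph{c}}$, writing $\textbf{\emph{c}}{\bf s}={\bf s}+{\bf 1}$ with $(x+1){\bf 1}=0$, the identity $(x+1)^{c-1}({\bf s}+{\bf 1})=(x+1)^{c-1}{\bf s}\neq0$ (valid since $c\geq2$) shows $(x+1)^{c-1}$ does not annihilate $\textbf{\emph{c}}{\bf s}$ while $(x+1)^{c}$ does, so the complexity is unchanged. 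For $\textbf{\emph{u}}$ I would write the cyclic sequence as $1^{n}\,W\,1^{n-2}\,V$ around the unique runs of $n$ and $n-2$ ones (uniqueness holds for $n\geq3$, since the length--$n$ word $1\,0^{n-2}\,1$ and its complement each occur exactly once): the operator fixes $V$, displaces the block $W$ by two positions, and trades a pair of consecutive ones at one run for a pair at the other. Each such pair meets the even-- and odd--indexed positions once, so both half--weights are unchanged and $sp$ is preserved; the argument for $\textbf{\emph{z}}$ is identical, or follows from $\textbf{\emph{z}}=\textbf{\emph{c}}\,\textbf{\emph{u}}\,\textbf{\emph{c}}$.

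With $H$ and $G$ acting on $D$, every orbit has $2$--power size, so it remains to bound stabilizers. The substantive step, for $n\geq4$, is to show that $H$ acts freely, i.e.\ that the eight sequences ${\bf s},\textbf{\emph{c}}{\bf s},\textbf{\emph{z}}{\bf s},\textbf{\emph{u}}{\bf s},\textbf{\emph{c}}\textbf{\emph{z}}{\bf s},\textbf{\emph{c}}\textbf{\emph{u}}{\bf s},\textbf{\emph{z}}\textbf{\emph{u}}{\bf s},\textbf{\emph{c}}\textbf{\emph{z}}\textbf{\emph{u}}{\bf s}$ are pairwise inequivalent. Using ${\bf s}\not\sim\textbf{\emph{c}}{\bf s}$ (recalled in the introduction), this reduces to checking that $\textbf{\emph{z}}{\bf s}$, $\textbf{\emph{u}}{\bf s}$ and $\textbf{\emph{z}}\textbf{\emph{u}}{\bf s}$ are each inequivalent to ${\bf s}$ and to $\textbf{\emph{c}}{\bf s}$; this rests on the uniqueness of the runs of $n-2$ zeroes and of $n-2$ ones, which pins down exactly where $\textbf{\emph{z}}$ and $\textbf{\emph{u}}$ relocate the long runs and forbids two images from being cyclic shifts of each other (the bound $n\geq4$ is what excludes the low--order coincidences). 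Granting this, $8\mid|D|$ for every $n\geq4$, which is the first assertion. When the order is even there are no CR de Bruijn sequences, so in addition ${\bf s}\not\sim\textbf{\emph{r}}{\bf s}$ and ${\bf s}\not\sim\textbf{\emph{cr}}{\bf s}$; adjoining the central involution $\textbf{\emph{r}}$ then makes the full group $G$ of order $16$ act freely, every orbit has size $16$, and $16\mid|D|$, which is the second assertion.

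The hard part throughout is exactly this stabilizer analysis, and the reason it cannot be pushed to modulus $16$ for odd order is precisely the CR phenomenon studied in this paper: for odd $n$ the element $\textbf{\emph{cr}}$ fixes every CR de Bruijn sequence, so $G$ fails to act freely and only the subgroup $H$, which does not involve $\textbf{\emph{r}}$, yields the uniform conclusion $8\mid|D|$. Thus the real work is the run--length bookkeeping guaranteeing that $\textbf{\emph{z}}{\bf s}$ and $\textbf{\emph{u}}{\bf s}$ never coincide, up to shift, with ${\bf s}$ or $\textbf{\emph{c}}{\bf s}$ (and, in the even case, that the $\textbf{\emph{r}}$--translates remain distinct, via the additional inequivalences $\textbf{\emph{z}}{\bf s}\not\sim\textbf{\emph{r}}{\bf s}$ and the like); the finitely many exceptional small orders, where such an accidental coincidence can occur, are precisely what the hypotheses $n\geq4$ and $k\geq3$ are there to exclude.
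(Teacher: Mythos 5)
First, a framing point: the paper does not prove this statement at all --- it is quoted from Etzion and Lempel \cite{Etzion84}, and what the paper records are exactly the ingredients of that proof, namely the lemma that for $n\geq 5$ the set $G_1{\bf s}$, with $G_1=\{\textbf{\emph{e}},\textbf{\emph{r}},\textbf{\emph{z}},\textbf{\emph{u}},\textbf{\emph{rz}},\textbf{\emph{ru}},\textbf{\emph{zu}},\textbf{\emph{rzu}}\}$, consists of eight pairwise inequivalent de Bruijn sequences of the same maximal complexity, and Lemma \ref{lem7}, that for order $2k$ the union $G_1{\bf s}\cup G_1\textbf{\emph{c}}{\bf s}$ consists of sixteen. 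Your orbit--stabilizer framing is this same argument in group language, and the parts you actually carry out are sound: the reduction to $k=n$ (a complexity of the form $2^k-1$ is attainable only as the maximum $2^n-1$, which also silently repairs the statement's second part), the preservation of maximal complexity by $\textbf{\emph{z}}$ and $\textbf{\emph{u}}$ via the subparity, by $\textbf{\emph{c}}$ via the minimal polynomial, and your diagnosis that the CR phenomenon (${\bf s}=\textbf{\emph{cr}}{\bf s}$ for odd $n$) is precisely what blocks modulus $16$ for odd order agrees with the paper's discussion before Theorem \ref{distribution}.

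The gap is that the entire content of the theorem is the freeness you postpone, and the justification you offer (``uniqueness of the runs pins down where $\textbf{\emph{z}}$ and $\textbf{\emph{u}}$ relocate the long runs'') cannot prove the inequivalences that your particular choice of group forces you to check. By replacing Etzion--Lempel's $G_1=\langle\textbf{\emph{r}},\textbf{\emph{z}},\textbf{\emph{u}}\rangle$ with $H=\langle\textbf{\emph{c}},\textbf{\emph{z}},\textbf{\emph{u}}\rangle$ in the mod $8$ step, you must show, e.g., ${\bf s}\not\sim\textbf{\emph{cz}}{\bf s}$, equivalently $\textbf{\emph{z}}{\bf s}\not\sim\textbf{\emph{c}}{\bf s}$: here you are comparing a sequence with a complemented one, so zero--runs in one correspond to one--runs in the other and both sequences are balanced --- no bookkeeping of run positions separates them. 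The genuine argument (it is how \cite{Etzion84} proves Lemma \ref{lem7}, and how this paper reuses it in proving Theorem \ref{distribution}) goes through the deletion operator $\textbf{\emph{d}}$: since $\textbf{\emph{d}}$ is unchanged by $\textbf{\emph{z}}$ and $\textbf{\emph{u}}$ and commutes with $\textbf{\emph{c}}$ and with shifts, a relation ${\bf s}\sim g\,\textbf{\emph{c}}{\bf s}$ with $g\in\langle\textbf{\emph{z}},\textbf{\emph{u}}\rangle$ would force $\textbf{\emph{d}}{\bf s}\sim\textbf{\emph{dc}}{\bf s}$, and the nontrivial lemma is that $\textbf{\emph{dc}}{\bf s}\neq\textbf{\emph{d}}{\bf s}$ for every de Bruijn sequence. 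So your variant already needs a piece of Lemma \ref{lem7} for the mod $8$ assertion, where the source's group $\langle\textbf{\emph{r}},\textbf{\emph{z}},\textbf{\emph{u}}\rangle$ needs only the eight-element inequivalence lemma; the swap buys nothing and costs more. Similarly, for even order, ``adjoining $\textbf{\emph{r}}$'' requires $g_1{\bf s}\not\sim g_2\textbf{\emph{c}}{\bf s}$ for all $g_1,g_2\in G_1$, not merely ${\bf s}\not\sim\textbf{\emph{r}}{\bf s}$ and ${\bf s}\not\sim\textbf{\emph{cr}}{\bf s}$; that is exactly Lemma \ref{lem7}, which again reduces via $\textbf{\emph{d}}$ to $\textbf{\emph{dc}}{\bf s}\neq\textbf{\emph{dr}}{\bf s}$. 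Finally, the pairwise-inequivalence lemma holds only for $n\geq 5$, so the case $n=4$ of the first assertion is not covered by your sketch and must be verified separately.
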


These results are derived using the fact  ${\bf s}, {\bf cs}, {\bf rs}$ have the same linear complexity %$C({\bf s}) = C({\bf c s}) = C({\bf r s})$
and the following lemmas.

\begin{lemma} \cite[Lemmas 2, 4]{Etzion84}
Let $G_1=\{\textbf{\emph{e}},\textbf{\emph{r}},\textbf{\emph{z}},\textbf{\emph{u}},\textbf{\emph{rz}},\textbf{\emph{ru}},\textbf{\emph{zu}},\textbf{\emph{rzu}}\}$, where $\textbf{\emph{e}}$ is the identity operator. For $n\geq 5$ and for each de Bruijn sequence ${\bf s}$, $G_1{\bf s}$ consists of eight pairwise inequivalent de Bruijn sequences.
Furthermore, if the linear complexity of ${\bf s}$ is $2^n-1$, then sequences in $G_1{\bf s}$ have the same linear complexity $2^n-1$.
\end{lemma}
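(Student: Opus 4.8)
The plan is to regard $G_1$ as the group generated by the three involutions $\textbf{\emph{r}},\textbf{\emph{z}},\textbf{\emph{u}}$ acting on shift-equivalence classes of order-$n$ de Bruijn sequences, to prove that this action is free on every orbit (which gives the eight pairwise inequivalent images), and then to treat the complexity statement separately through the subparity criterion. Throughout I work modulo cyclic shift, so all identities below are up to equivalence. I would first record that $\textbf{\emph{z}}$ and $\textbf{\emph{u}}$ are well defined: counting $n$-windows shows that a de Bruijn sequence contains a unique run of exactly $n$ zeroes and a unique run of exactly $n-2$ zeroes (the windows $10^{n-1}$ and $10^{n-2}1$ each occur once), and symmetrically for ones. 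Writing a de Bruijn sequence cyclically as $R_n\,P\,R_{n-2}\,Q$, where $R_k$ denotes the run of $k$ zeroes and $P,Q$ are the connecting blocks, $\textbf{\emph{z}}$ replaces it by $R_{n-2}\,P\,R_n\,Q$; a boundary check that the overlapping $n$-windows agree (equivalently that the common reduction $\textbf{\emph{d}}{\bf s}$ is unchanged) shows $\textbf{\emph{z}}{\bf s}$ is again de Bruijn, and likewise for $\textbf{\emph{u}}$. Since $\textbf{\emph{r}},\textbf{\emph{z}},\textbf{\emph{u}}$ are involutions that pairwise commute (reversal sends a zero-run to a zero-run of equal length, and $\textbf{\emph{z}},\textbf{\emph{u}}$ act on disjoint symbols), $G_1$ realises $(\mathbb{Z}/2\mathbb{Z})^{3}$.

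The key organising tool is the reduction $\textbf{\emph{d}}$: one checks $\textbf{\emph{d}}(\textbf{\emph{z}}{\bf s})=\textbf{\emph{d}}(\textbf{\emph{u}}{\bf s})=\textbf{\emph{d}}{\bf s}$ while $\textbf{\emph{d}}(\textbf{\emph{r}}{\bf s})=\textbf{\emph{r}}\,\textbf{\emph{d}}{\bf s}$, so $\textbf{\emph{d}}$ is constant on the four images coming from $\{\textbf{\emph{e}},\textbf{\emph{z}},\textbf{\emph{u}},\textbf{\emph{zu}}\}$ and equal to $\textbf{\emph{r}}\,\textbf{\emph{d}}{\bf s}$ on the other four. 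Thus, once I show $\textbf{\emph{d}}{\bf s}\not\equiv\textbf{\emph{r}}\,\textbf{\emph{d}}{\bf s}$, every comparison between a reversal-free image and a reversal-containing image is settled at once, and it remains only to separate the four reversal-free images from one another. By the group relations this amounts to showing that none of $\textbf{\emph{z}},\textbf{\emph{u}},\textbf{\emph{zu}}$ fixes $[{\bf s}]$. For $\textbf{\emph{z}}$ I would use uniqueness of the runs of length $n$ and $n-2$ to pin the only possible shift realising $\textbf{\emph{z}}{\bf s}\equiv{\bf s}$; matching the block decompositions then forces $P=Q$, and since $|P|=2^{n-1}-n+1\ge n$ for $n\ge5$, this produces an $n$-window occurring twice, contradicting the de Bruijn property. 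The case $\textbf{\emph{u}}$ follows by applying this to $\textbf{\emph{c}}{\bf s}$ via $\textbf{\emph{c}}\,\textbf{\emph{u}}=\textbf{\emph{z}}\,\textbf{\emph{c}}$, and $\textbf{\emph{zu}}$ by an analogous shift-pinning argument using all four extremal runs, which again forces a repeated $n$-window.

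The step I expect to be the main obstacle is $\textbf{\emph{d}}{\bf s}\not\equiv\textbf{\emph{r}}\,\textbf{\emph{d}}{\bf s}$ for $n\ge5$, because a reflection of the reduced sequence need not force ${\bf s}\equiv\textbf{\emph{r}}{\bf s}$ directly: one must analyse how such a reflection acts on the two runs of $n-2$ zeroes and the two runs of $n-2$ ones of $\textbf{\emph{d}}{\bf s}$ (which are exactly the sites re-extended to recover ${\bf s},\textbf{\emph{z}}{\bf s},\textbf{\emph{u}}{\bf s},\textbf{\emph{zu}}{\bf s}$) and rule out each resulting symmetric configuration. A reflection fixing at least one zero-run and one one-run immediately yields a reverse-symmetric de Bruijn sequence, contradicting ${\bf s}\not\equiv\textbf{\emph{r}}{\bf s}$; the remaining configurations, where the axis swaps the two long runs, are excluded using the fact that a de Bruijn sequence admits only the two reflection axes described in \cite{Mayhew94}, together with the symmetric-state count already exploited in Lemma~\ref{lem35}. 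This is precisely where the hypothesis $n\ge5$ is indispensable.

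Finally, for the complexity claim I would invoke the subparity criterion (Lemma~1 of \cite{Etzion84}) used in the proof of Lemma~\ref{lem36}: linear complexity $2^n-1$ is equivalent to subparity $1$. Reversal preserves linear complexity outright (the minimal polynomial is replaced by its reciprocal), so it suffices to check that $\textbf{\emph{z}}$ and $\textbf{\emph{u}}$ preserve the subparity. Using the block form $R_n\,P\,R_{n-2}\,Q$, the swap displaces the middle block $P$ by exactly two positions while leaving $Q$ fixed, and since both runs have odd length this even displacement preserves the index parity of every bit; a short parity-class count of the ones contributed by the two runs then shows that the even- and odd-indexed sums are unchanged. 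Hence the subparity, and therefore the property of having linear complexity $2^n-1$, is shared by every sequence in $G_1{\bf s}$.
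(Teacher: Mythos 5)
Note first that the paper does not prove this lemma at all: it is quoted verbatim from \cite{Etzion84} (Lemmas~2 and~4 there), so your argument can only be compared against the cited source, and it must stand on its own. Structurally it does, and most steps check out. The uniqueness of the four extremal runs via the windows $10^{n-1}$, $10^{n-2}1$ (and their complements) is right; so is the identification of $G_1$ with $(\mathbb{Z}/2\mathbb{Z})^3$ acting on shift classes, and the $\textbf{\emph{d}}$-equivariance $\textbf{\emph{d}}\textbf{\emph{z}}=\textbf{\emph{d}}\textbf{\emph{u}}=\textbf{\emph{d}}$, $\textbf{\emph{d}}\textbf{\emph{r}}=\textbf{\emph{r}}\textbf{\emph{d}}$, which correctly reduces the eight cross-comparisons to $\textbf{\emph{d}}{\bf s}\not\equiv\textbf{\emph{r}}\textbf{\emph{d}}{\bf s}$ plus fixed-point-freeness of $\textbf{\emph{z}},\textbf{\emph{u}},\textbf{\emph{zu}}$. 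Your shift-pinning for $\textbf{\emph{z}}$ is correct: aligning the unique $0^n$ runs and then the unique $0^{n-2}$ runs forces $P=Q$ with $|P|=2^{n-1}-n+1\ge n$, hence a repeated $n$-window; complementation transfers this to $\textbf{\emph{u}}$, and the $\textbf{\emph{zu}}$ case, though only sketched, does go through (aligning the zero runs forces the connecting blocks to repeat, and a window such as $1^{n-2}a_0a_1$ straddling the one-runs then occurs twice). The subparity argument for the complexity claim is also sound: the swap displaces one connecting block by two positions, and for $\textbf{\emph{u}}$ the parity-class contributions of the two odd-length one-runs exchange in a way that cancels, as a short computation confirms; reversal preserves complexity via the reciprocal minimal polynomial.

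The one genuine soft spot is exactly where you predicted: excluding reflections of $\textbf{\emph{d}}{\bf s}$ whose axis \emph{swaps} the re-extension runs. These configurations are precisely the possibilities $\textbf{\emph{r}}{\bf s}\equiv\textbf{\emph{z}}{\bf s},\textbf{\emph{u}}{\bf s},\textbf{\emph{zu}}{\bf s}$, and for them no lift to a reverse-symmetric de Bruijn sequence exists (inserting into one run of a swapped pair breaks the symmetry), so no contradiction with ${\bf s}\not\equiv\textbf{\emph{r}}{\bf s}$ is available; your appeal to the two-axes fact of \cite{Mayhew94} is, as written, an assertion rather than a proof. However, the counting tool you name does close the gap, and uniformly, with no case analysis of run configurations at all: count palindromic $n$-windows of $\textbf{\emph{d}}{\bf s}$ with multiplicity. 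Every palindromic window occurs exactly once in $\textbf{\emph{d}}{\bf s}$ except $0^n$ and $1^n$ (absent) and $10^{n-2}1$, $01^{n-2}0$ (each twice), so $2^{\lceil n/2\rceil}-4\ge 4$ palindromes have odd multiplicity when $n\ge 5$. But if $\sigma$ is a reflection with $\sigma(\textbf{\emph{d}}{\bf s})=\textbf{\emph{d}}{\bf s}$, then $\sigma$ pairs up the occurrences of each palindromic window except those centered on the axis, and the axis carries at most two centers (two fixed positions or two fixed gaps, according to the parity type of $\sigma$ and of $n$); hence at most two palindromes can have odd multiplicity, a contradiction. This yields $\textbf{\emph{d}}{\bf s}\not\equiv\textbf{\emph{r}}\textbf{\emph{d}}{\bf s}$ directly and subsumes your fixed-axis case as well; it is the same symmetric-position count the paper itself uses in the proof of Lemma~\ref{lem35}. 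I recommend replacing the configuration discussion by this count; with that substitution your proof is complete.
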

\begin{lemma} \cite[Lemma 7]{Etzion84} \label{lem7}
 If ${\bf s}$ is a de Bruijn sequence of order $2k$, $k\geq 3$, then the union of $G_1{\bf s}$ and $G_1\textbf{\emph{c}}{\bf s}$ consists of sixteen pairwise inequivalent de Bruijn sequences.
\end{lemma}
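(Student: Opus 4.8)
The plan is to deduce the statement from the previous lemma by an orbit argument for the order-$16$ group $G=\langle \textbf{\emph{r}},\textbf{\emph{z}},\textbf{\emph{u}},\textbf{\emph{c}}\rangle=G_1\cup G_1\textbf{\emph{c}}$. Write $\mathbf{x}\sim\mathbf{y}$ for shift-equivalence. Applying the previous lemma to $\mathbf{s}$ and to $\textbf{\emph{c}}\mathbf{s}$, each of $G_1\mathbf{s}$ and $G_1\textbf{\emph{c}}\mathbf{s}$ already consists of eight pairwise inequivalent de Bruijn sequences, so it suffices to show the two orbits are disjoint modulo $\sim$. If $g_1\mathbf{s}\sim g_2\textbf{\emph{c}}\mathbf{s}$ with $g_1,g_2\in G_1$, then applying $g_2^{-1}$ (which preserves shift-classes) gives $\textbf{\emph{c}}\mathbf{s}\sim g\mathbf{s}$ with $g=g_2^{-1}g_1\in G_1$. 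Thus disjointness is equivalent to proving $\textbf{\emph{c}}\mathbf{s}\not\sim g\mathbf{s}$ for every $g\in G_1$. Writing $g=\textbf{\emph{r}}^{\alpha}\textbf{\emph{z}}^{\beta}\textbf{\emph{u}}^{\gamma}$ with $\alpha,\beta,\gamma\in\{0,1\}$, the cases $g=\textbf{\emph{e}}$ and $g=\textbf{\emph{r}}$ are exactly the facts $\mathbf{s}\not\sim\textbf{\emph{c}}\mathbf{s}$ (all orders) and $\textbf{\emph{c}}\mathbf{s}\not\sim\textbf{\emph{r}}\mathbf{s}$ (even orders) recorded in the introduction; the remaining six cases, which involve $\textbf{\emph{z}}$ or $\textbf{\emph{u}}$, carry the real content.

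To treat these six cases uniformly I would apply the deletion operator $\textbf{\emph{d}}$, which respects $\sim$ and satisfies $\textbf{\emph{d}}\textbf{\emph{z}}=\textbf{\emph{d}}\textbf{\emph{u}}=\textbf{\emph{d}}$ (the $\textbf{\emph{z}},\textbf{\emph{u}}$ moves only relocate the long runs that $\textbf{\emph{d}}$ then shortens) while commuting with the global operators, $\textbf{\emph{d}}\textbf{\emph{r}}=\textbf{\emph{r}}\textbf{\emph{d}}$ and $\textbf{\emph{d}}\textbf{\emph{c}}=\textbf{\emph{c}}\textbf{\emph{d}}$. Applying $\textbf{\emph{d}}$ to a hypothetical $\textbf{\emph{c}}\mathbf{s}\sim \textbf{\emph{r}}^{\alpha}\textbf{\emph{z}}^{\beta}\textbf{\emph{u}}^{\gamma}\mathbf{s}$ collapses the $\textbf{\emph{z}},\textbf{\emph{u}}$ factors and yields $\textbf{\emph{c}}\,\textbf{\emph{d}}\mathbf{s}\sim\textbf{\emph{r}}^{\alpha}\,\textbf{\emph{d}}\mathbf{s}$. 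Hence all the remaining cases are subsumed by the two statements (A) $\textbf{\emph{c}}\,\textbf{\emph{d}}\mathbf{s}\not\sim\textbf{\emph{d}}\mathbf{s}$ and (B) $\textbf{\emph{c}}\,\textbf{\emph{d}}\mathbf{s}\not\sim\textbf{\emph{r}}\,\textbf{\emph{d}}\mathbf{s}$ for the period-$(2^n-4)$ sequence $\textbf{\emph{d}}\mathbf{s}$. Equivalently, one checks that the fibre of $\textbf{\emph{d}}$ over $\textbf{\emph{d}}\mathbf{s}$ is $\{\mathbf{s},\textbf{\emph{z}}\mathbf{s},\textbf{\emph{u}}\mathbf{s},\textbf{\emph{z}}\textbf{\emph{u}}\mathbf{s}\}$, so (A)--(B) say precisely that $\textbf{\emph{c}}\mathbf{s}$ and $\textbf{\emph{r}}\textbf{\emph{c}}\mathbf{s}$ do not fall into this fibre.

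For both (A) and (B) I would exploit that $\textbf{\emph{c}}$ commutes with the shift $L$. If $L^{t}\textbf{\emph{d}}\mathbf{s}=\textbf{\emph{c}}\,\textbf{\emph{d}}\mathbf{s}$, then $L^{2t}\textbf{\emph{d}}\mathbf{s}=\textbf{\emph{d}}\mathbf{s}$, forcing $t\equiv 0\pmod{2^{n-1}-2}$; the value $t=0$ is impossible since no sequence equals its complement, leaving only the half-period anti-periodic case $(\textbf{\emph{d}}\mathbf{s})_{i+(2^{n-1}-2)}=\overline{(\textbf{\emph{d}}\mathbf{s})_{i}}$, which I would rule out from the run-structure of $\textbf{\emph{d}}\mathbf{s}$ exactly as in the classical proof that $\mathbf{s}\not\sim\textbf{\emph{c}}\mathbf{s}$; this part is parity-independent. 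Statement (B) is the genuine obstacle and the place where evenness of $n$ is indispensable: unravelling $\textbf{\emph{c}}\,\textbf{\emph{d}}\mathbf{s}\sim\textbf{\emph{r}}\,\textbf{\emph{d}}\mathbf{s}$ along the same lines, after locating the forced conjugate pair, produces an $n$-tuple $c_0,c_1,\ldots,c_{n-1}$ with $c_i=\overline{c_{n-i}}$, and for even $n$ the middle coordinate $i=n/2$ would have to satisfy $c_{n/2}=\overline{c_{n/2}}$ --- the same contradiction that forbids CR de Bruijn sequences of even order via Lemma~\ref{lemma1}. I expect the main difficulty to be transporting this middle-coordinate obstruction from $\mathbf{s}$ to the shortened sequence $\textbf{\emph{d}}\mathbf{s}$; once (A) and (B) are established, the two disjoint families of eight give the sixteen pairwise inequivalent de Bruijn sequences.
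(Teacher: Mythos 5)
You should first note that the paper does not prove this lemma at all: it is quoted verbatim as \cite[Lemma 7]{Etzion84}, so there is no in-paper proof to match. The only window the paper gives onto the source argument is in the proof of Theorem~\ref{distribution}, where it recalls that ``the proof of Lemma 7 in \cite{Etzion84}'' establishes $\textbf{\emph{dc}}{\bf s}\neq \textbf{\emph{d}}{\bf s}$ and, for even order, $\textbf{\emph{dc}}{\bf s}\neq \textbf{\emph{dr}}{\bf s}$. Your architecture reproduces exactly that skeleton: reduce all cross-equivalences $g_1{\bf s}\sim g_2\textbf{\emph{c}}{\bf s}$ to $\textbf{\emph{c}}{\bf s}\sim g{\bf s}$ with $g\in G_1$ (legitimate, since $G_1$ is an elementary abelian $2$-group acting on shift classes), then collapse the $\textbf{\emph{z}},\textbf{\emph{u}}$ factors with $\textbf{\emph{d}}$, using $\textbf{\emph{dz}}=\textbf{\emph{du}}=\textbf{\emph{d}}$, $\textbf{\emph{dr}}=\textbf{\emph{rd}}$, $\textbf{\emph{dc}}=\textbf{\emph{cd}}$ (consistent with Lemma~\ref{lem333}), arriving at your statements (A) and (B). Up to this point the proposal is correct and faithful to the cited source. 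Incidentally, four of your six cases can be closed without $\textbf{\emph{d}}$ at all: since $\textbf{\emph{c}}\textbf{\emph{z}}\textbf{\emph{c}}=\textbf{\emph{u}}$, the relation $\textbf{\emph{c}}{\bf s}\sim g{\bf s}$ yields $(g\textbf{\emph{c}})^2{\bf s}\sim{\bf s}$ with $(g\textbf{\emph{c}})^2=g\,(\textbf{\emph{c}}g\textbf{\emph{c}})=\textbf{\emph{zu}}$ for $g\in\{\textbf{\emph{z}},\textbf{\emph{u}},\textbf{\emph{rz}},\textbf{\emph{ru}}\}$, contradicting the eight-element lemma; the survivors $g\in\{\textbf{\emph{zu}},\textbf{\emph{rzu}}\}$ are precisely (A) and (B).

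The genuine gap is that (A) and (B) --- the entire content of the lemma beyond formal bookkeeping --- are asserted rather than proved, and the mechanisms you sketch for them do not work as described. For (A), your reduction to the half-period anti-periodic case $\textbf{\emph{d}}{\bf s}=(X\overline{X})$ is fine (the full least period $2^n-4$ of $\textbf{\emph{d}}{\bf s}$ can be justified because most $n$-windows occur in it exactly once), but ruling that case out ``from the run-structure of $\textbf{\emph{d}}{\bf s}$ exactly as in the classical proof'' fails: the run profile of $\textbf{\emph{d}}{\bf s}$ is complement-symmetric --- exactly two maximal runs of $n-2$ zeroes and two of $n-2$ ones, and $2^{n-k-2}$ maximal runs of each kind for every $k\leq n-3$ --- so no run count separates $\textbf{\emph{d}}{\bf s}$ from $\textbf{\emph{cd}}{\bf s}$, and anti-periodicity is fully consistent with this profile. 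Worse, the classical fact ${\bf s}\not\sim\textbf{\emph{c}}{\bf s}$ is not itself a run-structure argument: a standard proof observes that anti-periodicity forces linear complexity at most $2^{n-1}+1$, contradicting the Chan--Games--Key lower bound $2^{n-1}+n$ \cite{Chan82}, and this does not transfer to $\textbf{\emph{d}}{\bf s}$, whose period $2^n-4$ is not a power of two. For (B), you correctly identify the transport of the middle-coordinate obstruction as the main difficulty, but it is precisely where the sketch breaks down: the forced-conjugate-pair analysis (the analogue of Lemma~\ref{lem31}, via the characterization of Lemma~\ref{lemma1}) depends on every $n$-state occurring exactly once, and $\textbf{\emph{d}}{\bf s}$ is not de Bruijn --- the states $10^{n-2}1$ and $01^{n-2}0$ occur twice and six states (among them $0^n$ and $1^n$) are missing --- so no unique ``seam'' state with $c_i=\overline{c_{n-i}}$ is forced and the even-order contradiction at $i=n/2$ never materializes from the argument as proposed. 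Until (A) and (B) are supplied with actual proofs (as in \cite{Etzion84}), the proposal establishes only the reduction, not the lemma.
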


Based on  previous properties of CR de Bruijn sequences of odd order,  we can extend Theorem~\ref{thm:222} to the following.

\begin{theorem} \label{distribution}
For $n\geq 5$, $\gamma(2^n-1,n) \equiv 0~({\rm mod}\ 16)$.
\end{theorem}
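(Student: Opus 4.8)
The plan is to combine two sources of inequivalent de Bruijn sequences of complexity $2^n-1$: the eight sequences furnished by the group $G_1$ acting on an arbitrary such sequence, and the special structure of CR de Bruijn sequences established in the preceding lemmas. The target statement upgrades the known $\gamma(2^n-1,n)\equiv 0\pmod 8$ for odd $n$ to a congruence mod $16$, so the whole effort reduces to producing, for each orbit of eight sequences, a disjoint second orbit of eight, thereby forcing the count to be divisible by $16$.

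First I would dichotomize the de Bruijn sequences of complexity $2^n-1$ according to whether or not they are CR-sequences, and treat the two classes separately. For a non-CR sequence ${\bf s}$ of complexity $2^n-1$, the sequence $\textbf{\emph{c}}{\bf s}$ has the same linear complexity and, since ${\bf s}$ is not a CR-sequence, $\textbf{\emph{cr}}{\bf s}\neq{\bf s}$; I would argue as in Lemma~\ref{lem7} that $G_1{\bf s}\cup G_1\textbf{\emph{c}}{\bf s}$ consists of sixteen pairwise inequivalent de Bruijn sequences all of complexity $2^n-1$. The key point is that the eight operators in $G_1$ commute with complementation up to reversal in a controlled way, so the only possible coincidence between the two orbits would come from a CR-type relation, which is excluded by hypothesis; this contributes blocks of size $16$.

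The genuinely odd-order phenomenon is the CR class, and here I would use the machinery of Lemmas~\ref{lem35}, \ref{lem333} and \ref{lem36}. Given a CR de Bruijn sequence ${\bf s}=(S\,\textbf{\emph{cr}}S)$ of complexity $2^n-1$, Lemma~\ref{lem34} produces $2^{\frac{n-1}{2}}$ CR de Bruijn sequences by joining $S$ and $\textbf{\emph{cr}}S$ through the admissible conjugate pairs, and Lemma~\ref{lem36} guarantees all of these share the complexity $2^n-1$. Since $n\geq 5$ forces $\frac{n-1}{2}\geq 2$, this already yields a multiple of four. To promote this to a multiple of sixteen I would invoke Lemma~\ref{lem333}: the four sequences $S$, $\textbf{\emph{r}}S$, $S'$, $\textbf{\emph{r}}S'$ are pairwise inequivalent, and each gives rise to its own family of $2^{\frac{n-1}{2}}\geq 4$ CR de Bruijn sequences of the same complexity (the operators $\textbf{\emph{r}}$, $\textbf{\emph{uz}}$ preserve complexity $2^n-1$ by the earlier lemmas). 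Multiplying the factor $4$ from the four distinct kernels by the factor $2^{\frac{n-1}{2}}$ from the conjugate-pair choices gives $4\cdot 2^{\frac{n-1}{2}}\geq 16$, and I would check that these families are mutually disjoint using the uniqueness of $S$ from Lemma~\ref{lem31}.

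The main obstacle I anticipate is the disjointness bookkeeping: I must verify that the sixteen-element blocks arising from distinct CR orbits, and from the $G_1$-orbits of non-CR sequences, do not overlap and are not inadvertently merged by some operator identity. Concretely, the delicate step is showing that the $2^{\frac{n-1}{2}}$ conjugate-pair variants of a fixed $S$ do not coincide with the variants coming from $\textbf{\emph{r}}S$, $S'$ or $\textbf{\emph{r}}S'$; this is where the pairwise inequivalence in Lemma~\ref{lem333}(5) together with the uniqueness of the kernel $S$ containing $0^n$ must be applied carefully. Once disjointness is secured, every de Bruijn sequence of complexity $2^n-1$ lies in a block of size divisible by $16$, and the theorem follows.
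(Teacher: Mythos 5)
Your CR branch is essentially sound and close in spirit to the paper (you build blocks of size $4\cdot 2^{\frac{n-1}{2}}=2^{\frac{n+3}{2}}$ from the four kernels $S$, $\textbf{\emph{r}}S$, $S'$, $\textbf{\emph{r}}S'$, where the paper instead takes the full $G_1$-orbits of the $2^{\frac{n-1}{2}}$ variants, getting $8\cdot 2^{\frac{n-1}{2}}=2^{\frac{n+5}{2}}$). The genuine gap is in your non-CR branch: the dichotomy ``${\bf s}$ is CR or not'' is the wrong one. Whether $G_1{\bf s}$ and $G_1\textbf{\emph{c}}{\bf s}$ coincide is governed not by ${\bf s}$ itself being a CR-sequence but by whether \emph{some member of the orbit} $G_1{\bf s}$ is CR, equivalently whether $\textbf{\emph{dc}}{\bf s}=\textbf{\emph{dr}}{\bf s}$. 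Concretely, let ${\bf s}$ be a CR de Bruijn sequence of maximal complexity (these exist, e.g.\ for $n=5$) and set ${\bf t}=\textbf{\emph{z}}{\bf s}$. Then ${\bf t}$ is \emph{not} CR, since $\textbf{\emph{crz}}{\bf s}$ is shift equivalent to $\textbf{\emph{u}}\textbf{\emph{cr}}{\bf s}\sim\textbf{\emph{u}}{\bf s}$, which is inequivalent to $\textbf{\emph{z}}{\bf s}$ by the pairwise inequivalence of $G_1{\bf s}$; yet $\textbf{\emph{c}}{\bf t}=\textbf{\emph{cz}}{\bf s}=\textbf{\emph{uc}}{\bf s}\sim\textbf{\emph{ur}}{\bf s}\in G_1{\bf s}=G_1{\bf t}$, so $G_1{\bf t}\cup G_1\textbf{\emph{c}}{\bf t}=G_1{\bf t}$ contains only eight shift classes. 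Such ${\bf t}$ falls through your partition entirely: your Lemma-7-style argument fails for it (``the only possible coincidence \dots\ is excluded by hypothesis'' is false here), and it is also excluded from your CR blocks because it is not itself a CR-sequence. The four non-CR members of every CR-containing orbit are thus never placed in a block of size divisible by $16$.

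The paper's proof avoids this precisely by conditioning on $\textbf{\emph{dc}}{\bf s}=\textbf{\emph{dr}}{\bf s}$ (i.e.\ $\textbf{\emph{d}}{\bf s}=\textbf{\emph{dcr}}{\bf s}$): when the condition fails, the Lemma~\ref{lem7} argument yields sixteen inequivalent sequences of the same complexity; when it holds, $G_1{\bf s}=G_1\textbf{\emph{cr}}{\bf s}=G_1\textbf{\emph{c}}{\bf s}$ contains a CR-sequence, so one may assume ${\bf s}=(S\textbf{\emph{cr}}S)$ is CR, and the block is taken to be the union of the \emph{full} $G_1$-orbits of all $2^{\frac{n-1}{2}}$ conjugate-pair variants ${\bf s}'=(L^t(S)\textbf{\emph{cr}}L^t(S))$ from Lemma~\ref{lem34}; these orbits are pairwise disjoint because distinct conjugate pairs are used, giving $2^{\frac{n-1}{2}}\cdot 8=2^{\frac{n+5}{2}}\geq 2^5$ sequences of the same maximal complexity (Lemma~\ref{lem36} plus the $G_1$-invariance of complexity $2^n-1$). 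This enlarged block absorbs exactly the leftover sequences your partition misses, such as $\textbf{\emph{z}}{\bf s}$ above. Your argument is repairable along these lines — replace the four-kernel CR blocks by whole $G_1$-orbits of the variants and adopt the $\textbf{\emph{d}}$-level dichotomy — but as written the claimed covering of all maximal-complexity sequences by blocks of size divisible by $16$ does not hold.
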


\begin{proof}
We only need to prove this result holds for odd $n\geq5$. Suppose that ${\bf s}$ is a de Bruijn sequence of odd order $n\geq5$.
According to the proof of Lemma 7 in \cite{Etzion84},  $\textbf{\emph{dc}}{\bf s}\neq \textbf{\emph{d}}{\bf s}$ still holds.  Moreover,
if $\textbf{\emph{dc}}{\bf s}\neq \textbf{\emph{dr}}{\bf s}$,  then we can still generate 16 pairwise inequivalent de Bruijn sequences. Furthermore,  sequences in $G_1{\bf s}$
and $G_1{\bf cs}$ have the same linear complexity  if ${\bf s}$ has the maximum linear complexity $2^n-1$. Therefore,
these 16 de Bruijn sequences have the same linear complexity.

If $\textbf{\emph{dc}}{\bf s}= \textbf{\emph{dr}}{\bf s}$, {\it i.e.}, $\textbf{\emph{d}}{\bf s}= \textbf{\emph{dcr}}{\bf s}$,
then there exists a CR de Bruijn sequence in $G_1{\bf s}$ and $G_1{\bf s} = G_1\textbf{\emph{cr}}{\bf s} = G_1\textbf{\emph{c}}{\bf s}$. In this case,  we can not use Lemma~\ref{lem7}.
Without loss of generality, we can assume that ${\bf s}$ is a CR de Bruijn sequence with the form $(S\textbf{\emph{cr}}S)$ for some $S$. According to Lemma \ref{lem34}, we can use the given $S$ to generate $2^\frac{n-1}{2}$ distinct CR de Bruijn sequences of the form ${\bf s}'=(L^t(S)\textbf{\emph{cr}}L^t(S))$, for some $t$.  Moreover,  $G_1{\bf s}$ and $G_1{\bf s}'$ are disjoint because ${\bf s}'$ and $\textbf{\emph{r}}{\bf s}'$ are not in $G_1{\bf s}$; indeed, distinct conjugate pairs are used  to join $S$ and $\textbf{\emph{cr}}S$ to generate these distinct CR de Bruijn sequences.

Hence, for odd $n\geq 5$ and any given  CR de Bruijn sequence ${\bf s}=(S\textbf{\emph{cr}}S)$, we can generate
\[
2^{\frac{n-1}{2}}\times |G_1{\bf s}|=2^{\frac{n-1}{2}}\times2^3=2^{\frac{n+5}{2}}> 2^4
\]
distinct de Bruijn sequences. If the linear complexity of ${\bf s}$ is $2^n-1$, then by Lemma \ref{lem36}, all these sequences have the linear complexity $2^n-1$. The proof is complete.
\end{proof}

CR de Bruijn sequences with  maximum linear complexity  do exist.  For example,  one can verify easily that the CR de Bruijn sequences of order $n=5$ and $n=9$ generated by our earlier algorithm in Section~\ref{sec:main} (see Example~\ref{examples}) have maximum linear complexities.   From the proof of Theorem~\ref{distribution}, we have  $32$ distinct  de Bruijn sequences of order $5$ for each given CR de Bruijn sequence of order $5$ with maximum linear complexity,  and $128$  distinct  de Bruijn sequences of order $9$ similarly.
% for each given CR de Bruijn sequence of order $9$ with maximum linear complexity. 

\section{Conclusions}\label{Sec4}

In this paper, we solved a problem by Fredricksen saying that for any odd $n>1$ there exists an order $n$ de Bruijn sequence which is a $CR$-sequence, \ie, its complement sequence and reverse sequence are the same. Our proof is constructive, which uses the classical cycle joining method based on partitions of cycle structure of PCR successor according to their weights.  We provided an efficient successor rule to generate many de Bruijn sequences with the desired property.
Similar successor rules can also be deduced using other classical de Bruijn successor rules, together with the partition of cycle structures.
Moreover, we refined a characterization of CR de Bruijn sequences and showed  the number of CR  de Bruijn sequences is a multiple of $2^{\frac{n+3}{2}}$ for odd $n\geq 5$.  Finally we studied the distribution of order $n$ de Bruijn sequences with the maximum linear complexity and proved these numbers are congruent to $0$ modulo $16$.

%We encourage the interested readers to explore our method in order to generate more CR de Bruijn sequences which are also $CR$-sequences. Studying properties of such special de Bruijn sequences is also necessary to solve some important problems about de Bruijn sequences.

\end{document}